\newtheorem{observation}[theorem]{Observation}
\DeclareMathOperator{\polylog}{polylog}
\newcommand{\To}{\rightsquigarrow}
\newcommand{\caD}{\mathcal{D}}
\newcommand{\caP}{\mathcal{P}}
\newcommand{\MM}{\mathsf{MM}}
\newcommand{\Fast}{\mathrm{Fast}}
\newcommand{\Extend}{\mathrm{Extend}}
\newcommand{\Tin}{T^{\sf in}}
\newcommand{\Tout}{T^{\sf out}}
\newcommand{\CL}{\mathsf{CL}}
\newcommand{\CR}{\mathsf{CR}}
\newcommand{\BCP}{\mathsf{BCP}}
\newcommand{\parent}{\mathrm{parent}}
\title{Improved Distance Sensitivity Oracles with Subcubic Preprocessing Time}
\titlerunning{Improved DSOs with Subcubic Preprocessing Time}
\author{Hanlin Ren}{Institute for Interdisciplinary Information Sciences, Tsinghua University, China}{h4n1in.r3n@gmail.com}{}{}{}
\authorrunning{H. Ren}
\keywords{Graph theory, Shortest paths, Distance sensitivity oracles}
\begin{document}
	\maketitle
	
	\begin{abstract}
		We consider the problem of building \emph{distance sensitivity oracles} (DSOs). Given a directed graph $G=(V, E)$ with edge weights in $\{1, 2, \dots, M\}$, we need to preprocess it into a data structure, and answer the following queries: given vertices $u,v\in V$ and a failed vertex or edge $f\in (V\cup E)$, output the length of the shortest path from $u$ to $v$ that does not go through $f$. Our main result is a simple DSO with $\tilde{O}(n^{2.7233}M)$ preprocessing time and $O(1)$ query time. Moreover, if the input graph is undirected, the preprocessing time can be improved to $\tilde{O}(n^{2.6865}M)$. The preprocessing algorithm is randomized with correct probability $\ge 1-1/n^C$, for a constant $C$ that can be made arbitrarily large. Previously, there is a DSO with $\tilde{O}(n^{2.8729}M)$ preprocessing time and $\polylog(n)$ query time [Chechik and Cohen, STOC'20].
		
		At the core of our DSO is the following observation from [Bernstein and Karger, STOC'09]: if there is a DSO with preprocessing time $P$ and query time $Q$, then we can construct a DSO with preprocessing time $P+\tilde{O}(n^2)\cdot Q$ and query time $O(1)$. (Here $\tilde{O}(\cdot)$ hides $\polylog(n)$ factors.)
	\end{abstract}
	
	\section{Introduction}\label{sec:intro}
	
	Suppose we are given a directed graph $G=(V,E)$, and we want to build a data structure that, given any two vertices $u,v\in V$ and a failure $f$, which is either a failed vertex or a failed edge, outputs the length of the shortest path from $u$ to $v$ that does not go through $f$. Such a data structure is called a \emph{distance sensitivity oracle} (or DSO for short).
	
	The problem of constructing DSOs is motivated by the fact that real-life networks often suffer from failures. Suppose we have a network with $n$ nodes and $m$ (directed) links, and we want to send a package from a node $u$ to another node $v$. Normally, it suffices to compute the shortest path from $u$ to $v$. However, if some node or link $f$ in this network fails, then our path cannot use $f$, and our task becomes to find the shortest path from $u$ to $v$ that does not go through $f$. Usually, there is only a very small number of failures. In this paper, we consider the simplest case, in which there is only one failed node or link.
	
	The problem of constructing a DSO is well-studied: Demetrescu et al.~\cite{DemetrescuTCR08} showed that given a directed graph $G=(V, E)$, there is a DSO which occupies $O(n^2\log n)$ space and can answer a query in $O(1)$ time. Duan and Zhang \cite{DuanZ17a} improved the space complexity to $O(n^2)$, which is optimal for dense graphs (i.e.~$m=\Theta(n^2)$).
	
	Unfortunately, the oracle in \cite{DemetrescuTCR08} requires a large preprocessing time ($O(mn^2+n^3\log n)$). In real-life applications, the preprocessing time of the DSO is also very important. Bernstein and Karger \cite{BernsteinK08, BernsteinK09} improved this time bound to $\tilde{O}(mn)$. Note that the All-Pairs Shortest Paths (APSP) problem, which only asks the distances between each pair of vertices $u,v$, is conjectured to require $mn^{1-o(1)}$ time to solve \cite{LincolnWW18}. Since we can solve the APSP problem by using a DSO, the preprocessing time $\tilde{O}(mn)$ is optimal in this sense.
	
	However, if the edge weights are small positive integers (that do not exceed $M$), then the APSP problem can be solved in $\tilde{O}(n^{2.5286}M)$ time \cite{Zwick02}\footnote{\cite{Zwick02} only claims a time bound of $\tilde{O}(n^{2.58}M)$; the time bound $\tilde{O}(n^{2.5286}M)$ is calculated using improved time bounds for rectangular matrix multiplication, see \cite{GallU18}.}, which is significantly faster than $O(mn)$ for dense graphs with small weights (e.g.~$M=O(1)$). Thus it might be possible to obtain better results than \cite{BernsteinK09} in the regime of small integer edge weights. Weimann and Yuster \cite{WeimannY13} showed that for any constant $\alpha\in(0,1)$, we can construct a DSO in $\tilde{O}(n^{1-\alpha+\omega}M)$ time.  Here $\omega<2.3728639$ is the exponent of matrix multiplication \cite{Gall14a}. However, the query time for this oracle is $\tilde{O}(n^{1+\alpha})$, which is \emph{superlinear}. Later, Grandoni and Williams \cite{GrandoniW12} showed that for every constant $\alpha\in[0,1]$, we can construct a DSO in $\tilde{O}(n^{\omega+1/2}M+n^{\omega+\alpha(4-\omega)}M)$ time, which answers each query in $\tilde{O}(n^{1-\alpha})$ time.
	
	Recently, in an independent work, Chechik and Cohen \cite{ChechikC20} showed that a DSO with $\polylog(n)$ query time can be constructed in $\tilde{O}(Mn^{2.873})$ time, achieving \emph{both} subcubic preprocessing time and polylogarithmic query time. The space complexity for their DSO is $\tilde{O}(n^{2.5})$.
	
	\subsection{Our Results}
	In this work, we show improved and simplified constructions of DSOs. We start with an observation.
	
	\begin{observation}[Informal]\label{lemma:complexity-of-DSO}
		If we have a DSO with preprocessing time $P$ and query time $Q$, then we can build a DSO with preprocessing time $P+\tilde{O}(n^2)\cdot Q$ and query time $O(1)$.
	\end{observation}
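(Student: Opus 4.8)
The plan is to carry out the Bernstein--Karger reduction~\cite{BernsteinK09}: treating the given DSO $D$ as a black box, I would build on top of it the \emph{fault-tolerant trees} of Demetrescu et al.~\cite{DemetrescuTCR08} --- which by themselves already form an $O(1)$-query DSO --- but populate them using queries to $D$ in place of direct shortest-path computations. A short setup phase comes first: I recover the distance matrix $\{d(u,v)\}$ by $n^2$ queries to $D$ (failing a dummy isolated vertex, which lies on no shortest path) in time $\tilde{O}(n^2)\cdot Q$, and from it I fix, in time within our budget, a globally consistent family of canonical shortest paths $\pi(u,v)$ plus enough positional data that for any query $(u,v,f)$ one can decide in $O(1)$ time whether $f$ lies on $\pi(u,v)$ and, if so, at which point. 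If $f\notin\pi(u,v)$ the replacement distance is just $d(u,v)$, so all remaining work concerns failures on the canonical path.

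For each ordered pair $(u,v)$ with $\pi(u,v)=(x_0=u,\dots,x_p=v)$, the fault-tolerant tree is a recursive decomposition of $\pi(u,v)$ of depth $O(\log p)$: a node splits its current subpath at a distinguished vertex, stores there a single replacement distance, and recurses on the two halves. The crucial accounting point is that the recursive subproblems are themselves fault-tolerant trees of genuine sub-pairs and are shared across all pairs, so the whole family of trees has only $\tilde{O}(n^2)$ distinct nodes; as each node is filled in by a \emph{single} query to $D$, the total number of queries to $D$ is $\tilde{O}(n^2)$ and the preprocessing time is $P+\tilde{O}(n^2)\cdot Q$. A query $(u,v,f)$ with $f$ on $\pi(u,v)$ is answered by identifying in $O(1)$ time, from the precomputed positional data, the recursion level responsible for $f$, and combining the distance stored at that level with two precomputed prefix/suffix distances along $\pi(u,v)$, giving $O(1)$ query time. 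Vertex and edge failures are treated uniformly by passing to the standard split graph in which each vertex $w$ becomes an edge $(\Tin_w,\Tout_w)$, so failing a vertex is the same as failing an edge.

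The genuine difficulty is not this bookkeeping but the structural statement that makes the recursion correct --- the Demetrescu--Thorup ``detour'' decomposition, which guarantees that the up to $\Theta(p)$ distinct replacement distances along $\pi(u,v)$ are determined by the $\tilde{O}(1)$ values stored in its tree and are recoverable by an $O(1)$ combination. I would take this decomposition essentially as a black box from~\cite{DemetrescuTCR08,BernsteinK09}; the one point needing fresh care is that it be established correctly when shortest paths are \emph{not} unique --- precisely the subtlety the abstract flags --- which is what forces the canonical-path choice to be threaded coherently through every level of the recursion. Granting the decomposition lemma and its edge-failure analogue, correctness of the $O(1)$ lookup is routine, and the claimed preprocessing bound $P+\tilde{O}(n^2)\cdot Q$ follows at once from the $\tilde{O}(n^2)$ bound on the number of queries to $D$.
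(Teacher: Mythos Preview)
Your high-level plan---observe that the Bernstein--Karger DSO is fully determined by $\tilde{O}(n^2)$ replacement-distance values and fill those in by querying the given oracle $\caD$---is exactly the paper's approach. But the mechanism you sketch is not the one in \cite{BernsteinK09} and would not by itself deliver the $\tilde{O}(n^2)$ count. The bound there does \emph{not} come from a recursive decomposition whose subproblems are ``fault-tolerant trees of genuine sub-pairs shared across all pairs'': in a midpoint-style split the sub-pairs depend on the outer pair $(u,v)$ and are not shared, while in a max-detour split the distinguished vertex itself costs many queries to locate. The actual device is to assign each vertex a random geometric priority; on every path $uv$ this singles out $O(\log n)$ \emph{key vertices} $k_0,\dots,k_\ell$, and one stores (Data~\ref{item:avoid-priority}) only the values $\|uv\diamond x\|$ for $x$ among the first and last $\tilde{O}(2^{c(u)})$ vertices on $uv$---the priority distribution, not sub-pair sharing, is what makes this sum to $\tilde{O}(n^2)$.

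The $O(1)$ query then uses the triple-path identity
\[
\|uv\diamond f\|=\min\bigl\{\|uk_{i+1}\diamond f\|+\|k_{i+1}v\|,\ \|uk_i\|+\|k_iv\diamond f\|,\ \|uv\diamond y\|\bigr\},
\]
where $k_i,k_{i+1}$ bracket $f$ and $y\in[k_i,k_{i+1}]$ maximizes $\|uv\diamond y\|$. The first two terms come from (Data~\ref{item:avoid-priority}); the third requires precomputing, for every pair and every key interval, the bottleneck vertex $y$ (Data~\ref{item:avoid-bottleneck}). This is exactly the step your ``each node is filled in by a single query to $D$'' glosses over, and it is the main technical content of the paper's proof: $y$ cannot be found with one oracle call, and the paper finds it via an $O(\log n)$-step binary search (\cref{lemma:find-y}) on top of RMQ structures built from (Data~\ref{item:avoid-priority}). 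If you literally black-box \cite{BernsteinK09} your conclusion stands, but the structure you describe is not that black box.
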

	
	For $\alpha=0.2$, the oracle in \cite{GrandoniW12} already achieves $\tilde{O}(n^{2.8729}M)$ preprocessing time and $O(n^{0.8})$ query time. \cref{lemma:complexity-of-DSO} implies that this query time can be brought down to $O(1)$.
	
	\cref{lemma:complexity-of-DSO} can be proved by a close inspection of \cite{BernsteinK09}: The algorithm in \cite{BernsteinK09} for constructing a DSO picks $\tilde{O}(n^2)$ carefully chosen queries $(u,v,f)$, such that the answers of all these queries can be computed in $\tilde{O}(mn)$ time. Then, from these answers, we can easily compute a DSO with query time $O(1)$. If, instead of computing these answers in $\tilde{O}(mn)$ time, we use the given DSO to answer these queries, the preprocessing time becomes $P+\tilde{O}(n^2)\cdot Q$.
	
	Our main result is a simple construction of DSOs with preprocessing time $\tilde{O}(n^{2.7233}M)$ and query time $O(1)$. If the input graph is undirected, we can achieve a better preprocessing time of $\tilde{O}(n^{2.6865}M)$.
	
	\begin{restatable}{theorem}{ThmDSO}\label{thm:DSO}
		We can construct a DSO with $\tilde{O}(n^{2.7233}M)$ preprocessing time and $O(1)$ query time. Moreover, if the input graph is \emph{undirected}, then we can construct a DSO with $\tilde{O}(n^{(3+\omega)/2}M)=\tilde{O}(n^{2.6865}M)$ preprocessing time and $O(1)$ query time. The construction algorithms are randomized and yield valid DSOs w.h.p.\footnote{We say that an event happens \emph{with high probability} (w.h.p.), if it happens with probability $1-1/n^C$, for some constant $C$ that can be made arbitrarily large.}
	\end{restatable}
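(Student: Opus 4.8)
The plan is to reduce the whole construction to the Bernstein--Karger observation (\cref{lemma:complexity-of-DSO}) together with one new ingredient: a DSO with a moderate, rather than polylogarithmic, query time. I would build a ``base'' DSO parameterized by a hop threshold $L=n^{\alpha}$, with query time $Q=\tilde{O}(n/L)=\tilde{O}(n^{1-\alpha})$ and preprocessing time $P(\alpha)\cdot M$; feeding it into \cref{lemma:complexity-of-DSO} then yields a DSO with $O(1)$ query time and preprocessing time $\tilde{O}\bigl((P(\alpha)+n^{3-\alpha})M\bigr)$, after which I would choose $\alpha$ to balance the two terms. For directed graphs the optimum lands at $n^{2.7233}$; for undirected graphs, where the base DSO admits a cheaper construction, it lands at $n^{(3+\omega)/2}$.

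For the base DSO I would follow the short/long dichotomy underlying the oracles of Weimann--Yuster and Grandoni--Williams. First, run Zwick's algorithm to obtain all pairwise distances, shortest-path trees, and all $L$-hop-bounded distances $d_{\le L}(u,v)$ in $\tilde{O}(n^{2.5286}M)$ time, and fix a canonical shortest path and a canonical $\le L$-hop shortest path for every ordered pair. Next, sample a pivot set $B$ of size $\tilde{O}(n/L)$ and, using fast (rectangular) matrix multiplication as in the single-source replacement-paths machinery of Grandoni--Williams, precompute for every $b\in B$ all replacement distances $d(b,\cdot,\cdot)$ out of $b$ in $G$ and $d(\cdot,b,\cdot)$ into $b$ in the reverse graph. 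Also precompute, for every triple $(u,v,f)$, the value $d_{\le L}(u,v,f)$, the length of the shortest $u$--$v$ walk that uses at most $L$ edges and avoids $f$; this differs from $d_{\le L}(u,v)$ only for the $\tilde{O}(n^{2}L)$ triples in which $f$ lies on the canonical $\le L$-hop shortest $u$--$v$ path, and those are handled by a bounded-hop computation that, exploiting the detour structure of short replacement paths, reduces to a moderate number of rectangular matrix multiplications. A query $(u,v,f)$ then returns $\min\bigl(d_{\le L}(u,v,f),\ \min_{b\in B}(d(u,b,f)+d(b,v,f))\bigr)$: both expressions are upper bounds on $d(u,v,f)$, and one of them is tight according to whether the canonical optimal replacement path has at most $L$ edges or more, where in the latter case it passes through a pivot with high probability. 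The query thus costs $\tilde{O}(|B|)=\tilde{O}(n^{1-\alpha})$, and $P(\alpha)$ is dominated by the pivot step and the bounded-hop step; balancing their cost against $n^{3-\alpha}$ (in the directed case via rectangular matrix multiplication bounds in the style of the refinements of Zwick's algorithm) yields the stated exponents, and the undirected case is the same skeleton with both of these steps carried out more cheaply.

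Two further points need to be addressed. Edge failures reduce to vertex failures by a subdivision gadget --- double all weights, then split each edge into two halves through a fresh vertex, so that failing the fresh vertex is failing the original edge --- which only doubles $n$ and keeps $M$ within a constant factor, but one must recheck that the sampling, the detour structure, and the canonicalization of shortest paths all survive this transformation. And the whole argument has to be made compatible with \emph{non-unique} shortest paths: the hitting-set argument presupposes a fixed canonical short-hop replacement path in each $G\setminus\{f\}$ for the pivots to hit, and the single-source replacement-paths subroutine is itself naturally stated under a uniqueness assumption, so a canonical choice must be made that is simultaneously respected by Zwick's algorithm, the replacement-paths computation, and the bounded-hop computation.

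This last point is where I expect the real difficulty to lie, and it is precisely where a careless treatment costs an extra factor of $M$ --- the gap between the $\tilde{O}(n^{2.7233}M^2)$ bound of the ESA version and the $\tilde{O}(n^{2.7233}M)$ bound claimed here. The naive remedy, perturbing the weights to break ties, forces one to clear denominators and thereby blows up the effective weight bound polynomially, which is fatal for Zwick's algorithm and the matrix-multiplication steps (it is exactly the source of the extra $M$, or worse). The plan is instead to commit to a single deterministic tie-breaking rule (for instance: fewest edges, then lexicographically smallest vertex sequence) and to argue \emph{directly} that every subroutine above --- Zwick's algorithm, the single-source replacement-paths computation, the bounded-hop $(\min,+)$ computation, and the hitting-set argument --- respects this canonicalization simultaneously, so that no perturbation and no extra weight-bound factor is ever incurred. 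Carrying this verification through, uniformly for directed and undirected graphs and for both edge and vertex failures, is the bulk of the work.
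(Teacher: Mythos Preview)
Your plan is essentially ``Grandoni--Williams plus one shot of \cref{lemma:complexity-of-DSO}'', and the paper itself points out that this only gets you to $\tilde{O}(n^{2.8729}M)$, not $\tilde{O}(n^{2.7233}M)$. The bottleneck is exactly your pivot step. To answer a query in $\tilde{O}(|B|)$ time you need constant-time access to $d(u,b,f)$ and $d(b,v,f)$ for every $b\in B$ and every failure $f$, which means precomputing single-source replacement paths from each of the $\tilde{O}(n^{1-\alpha})$ pivots. The best known bound for that is $\tilde{O}(Mn^{\omega})$ per source, so the pivot step costs $\tilde{O}(Mn^{\omega+1-\alpha})$; since $\omega>2$ this strictly dominates your $n^{3-\alpha}$ term for every $\alpha$, and no balancing of the short-path term against it lands anywhere near $2.7233$ (try any $\alpha$ and one of the two terms is $\ge 2.87$). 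The same obstruction kills the undirected target $n^{(3+\omega)/2}$.

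The idea you are missing is that the paper never precomputes replacement distances through the pivots at all. Instead it works with \emph{$r$-truncated} DSOs and iterates: starting from a small-$r$ truncated DSO built by the sampling trick (cost $\tilde{O}(r^{2}M\cdot\MM(n,n/r,n))$, or $\tilde{O}(rMn^{\omega})$ undirected), an $\Extend$ step uses a bridging set $H$ of size $\tilde{O}(nM/r)$ and answers pivot-through queries \emph{on the fly with the existing $r$-truncated DSO}, yielding a $(3/2)r$-truncated DSO with query time $\tilde{O}(nM/r)$; then an $r$-truncated version of \cref{lemma:complexity-of-DSO} (Observation~\ref{ob:fast-r-truncated-DSO-formal}) brings the query time back to $O(1)$ at cost $\tilde{O}(n^{2})\cdot\tilde{O}(nM/r)$. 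Repeating $O(\log n)$ times gives a full DSO, and the total extend/fast cost is a geometric sum dominated by $\tilde{O}(n^{3}M/r)=\tilde{O}(n^{3-a}M)$ --- with \emph{no} $n^{\omega+1-\alpha}$ term anywhere. Balancing $n^{3-a}$ against the small-$r$ cost is what gives $2.7233$ and $(3+\omega)/2$. A secondary point: the paper's tie-breaking is not ``lexicographically smallest'' but the Demetrescu--Italiano bottleneck-edge rule, chosen precisely because consistent shortest path trees under it can be computed in $\tilde{O}(n^{(3+\omega)/2}M^{1/2})$ time via all-pairs bottleneck shortest paths; your proposed lex rule would also need a fast-matrix-multiplication-compatible computation, which you have not supplied.
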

	
	Compared to previous constructions of DSOs, our results have a better dependence on $n$ and are conceptually simpler. Also, the size of our DSO is $\tilde{O}(n^2)$, which is better than \cite{ChechikC20}. However, one drawback of our construction is that it cannot handle negative edge weights. (The preprocessing and query time bounds stated above for \cite{WeimannY13, GrandoniW12, ChechikC20} still hold when the edge weights are integers in $[-M,M]$.)
	
	\subsection{Non-unique Shortest Paths}\label{sec:non-unique}
	
	A subtle technical issue is that the shortest paths in $G$ between some vertex pairs may not be unique. Normally, if we perturb every edge weight by a small random value, then we can ensure that shortest paths are unique w.h.p.~by the isolation lemma~\cite{MulmuleyVV87, Ta-Shma15}. However, the subcubic-time algorithms for APSP~\cite{Seidel95, ShoshanZ99, Zwick02} depend crucially on the assumption that edge weights are small integers, so we cannot perform the random perturbation.
	
	Fortunately, there is another elegant way of breaking ties, described in Section 3.4 of \cite{DemetrescuI04}, that is suitable for our propose. We arbitrarily assign a linear order $\prec$ over the edges of $G$. (For example, we index the edges by distinct numbers in $\{1,2,\dots,m\}$, and define $e_1\prec e_2$ if the index of $e_1$ is smaller than the index of $e_2$.) For a path $p$, define the \emph{bottleneck edge} of $p$ as the smallest edge (w.r.t.~$\prec$) in $p$. Let $G'$ be a subgraph of $G$ (think of $G'=G$ or $G'$ is $G$ with some vertex or edge removed), define $w_{G'}(u, v)$ as the largest (w.r.t.~$\prec$) edge $w\in G'$ such that there is a \emph{shortest} path in $G'$ from $u$ to $v$ whose bottleneck edge is $w$. Define $\rho_{G'}(u, v)$ to be the following path: If $u=v$, then $\rho_{G'}(u, v)$ is the empty path; otherwise, suppose $w_{G'}(u, v)$ is an edge from $u^\star$ to $v^\star$, then $\rho_{G'}(u, v)$ is the concatenation of $\rho_{G'}(u, u^\star)$, $w_{G'}(u, v)$ and $\rho_{G'}(v^\star, v)$. Note that $\rho_{G'}(u, v)$ is always well-defined: we can define $\rho_{G'}(u, v)$ inductively, in non-decreasing order of the distance from $u$ to $v$. By a similar induction, we can also prove that $\rho_{G'}(u, v)$ is always a shortest path from $u$ to $v$ in $G'$.

	We list a few properties of this tie-breaking method in the following theorem.
	
	\begin{restatable}{theorem}{thmProperties}\label{thm:property-APBSP}
		The following properties of $\rho_{G'}(u, v)$ are true:
		\begin{enumerate}[(Property a)]
			\item For every $u', v'\in \rho_{G'}(u, v)$ such that $u'$ appears before $v'$, the portion of $u'\To v'$ in $\rho_{G'}(u, v)$ coincides with the path $\rho_{G'}(u', v')$.\label{prop:consistency}
			\item Let $G'$ be a subgraph of $G$, suppose $\rho_G(u, v)$ is completely contained in $G'$, then $\rho_{G'}(u, v) = \rho_G(u, v)$.\label{prop:subgraph}
		\end{enumerate}
	\end{restatable}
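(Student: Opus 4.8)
Here is how I would prove Theorem~\ref{thm:property-APBSP}. I write $d_{G'}(u,v)$ for the distance from $u$ to $v$ in a subgraph $G'$ of $G$, and I use throughout that edge weights are positive, so that any sub-walk of a shortest path is itself a shortest path and any walk of length $d_{G'}(u,v)$ is a simple path.

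The plan rests on one preliminary observation, which I would establish first: \emph{for every $u,v$ and every subgraph $G'$ in which $v$ is reachable from $u$, the walk $\rho_{G'}(u,v)$ is a shortest path from $u$ to $v$ in $G'$, and when $u\neq v$ its bottleneck edge is exactly $w_{G'}(u,v)$.} I would prove this by induction on $d_{G'}(u,v)$. Writing $w_{G'}(u,v)$ as an edge $(u^\star,v^\star)$, the definition gives $\rho_{G'}(u,v)=\rho_{G'}(u,u^\star)\cdot w_{G'}(u,v)\cdot \rho_{G'}(v^\star,v)$; taking a shortest path that witnesses $w_{G'}(u,v)$ and cutting it at this edge shows $d_{G'}(u,u^\star)<d_{G'}(u,v)$ and $d_{G'}(v^\star,v)<d_{G'}(u,v)$, so the induction hypothesis applies to the two factors and $\rho_{G'}(u,v)$ has length $d_{G'}(u,v)$. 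For the bottleneck claim I would also record the following sub-fact coming out of the same argument: \emph{every edge of $\rho_{G'}(u,u^\star)$ and of $\rho_{G'}(v^\star,v)$ is $\succ w_{G'}(u,v)$.} Indeed, the $u\To u^\star$ prefix of the witnessing path is a shortest $u$-$u^\star$ path all of whose edges are $\succ w_{G'}(u,v)$, so the maximal bottleneck $w_{G'}(u,u^\star)$ is $\succ w_{G'}(u,v)$; by induction this is the bottleneck of $\rho_{G'}(u,u^\star)$, so all its edges are $\succ w_{G'}(u,v)$, and symmetrically on the $v^\star$-$v$ side. Hence the bottleneck of $\rho_{G'}(u,v)$ is $w_{G'}(u,v)$ itself, and (being of length $d_{G'}(u,v)$) the path is simple.

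For property (a) I would again induct on $d_{G'}(u,v)$, splitting $\rho_{G'}(u,v)$ at its bottleneck edge $w_{G'}(u,v)=(u^\star,v^\star)$ into the prefix $\rho_{G'}(u,u^\star)$, the edge, and the suffix $\rho_{G'}(v^\star,v)$; by simplicity their vertex sets partition the vertices of $\rho_{G'}(u,v)$, with every prefix-vertex before every suffix-vertex. If $u'$ and $v'$ lie on the same side, the claim is immediate from the induction hypothesis applied to $(u,u^\star)$ or to $(v^\star,v)$. The interesting case is when $u'$ lies on the prefix and $v'$ on the suffix, so the $u'\To v'$ portion $q$ of $\rho_{G'}(u,v)$ equals (portion of $\rho_{G'}(u,u^\star)$ from $u'$ to $u^\star$)$\,\cdot\, w_{G'}(u,v)\,\cdot\,$(portion of $\rho_{G'}(v^\star,v)$ from $v^\star$ to $v'$); by the sub-fact above every edge of $q$ other than $w_{G'}(u,v)$ is $\succ w_{G'}(u,v)$, so $q$ is a shortest $u'$-$v'$ path with bottleneck $w_{G'}(u,v)$, giving $w_{G'}(u',v')\succeq w_{G'}(u,v)$. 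The key point is to rule out $w_{G'}(u',v')\succ w_{G'}(u,v)$: any shortest $u'$-$v'$ path $p'$ with all edges $\succ w_{G'}(u,v)$ could be spliced between the prefix of $\rho_{G'}(u,v)$ up to $u'$ and its suffix from $v'$; additivity of distances along the shortest path $\rho_{G'}(u,v)$ makes the result a shortest $u$-$v$ path, and all of its edges would be $\succ w_{G'}(u,v)$ (the two outer pieces are sub-paths of $\rho_{G'}(u,u^\star)$ and $\rho_{G'}(v^\star,v)$), contradicting maximality of $w_{G'}(u,v)$. Hence $w_{G'}(u',v')=w_{G'}(u,v)=(u^\star,v^\star)$, so $\rho_{G'}(u',v')=\rho_{G'}(u',u^\star)\cdot w_{G'}(u,v)\cdot \rho_{G'}(v^\star,v')$, and applying the induction hypothesis to $(u,u^\star)$ and to $(v^\star,v)$ rewrites the two outer factors as the corresponding portions of $\rho_{G'}(u,u^\star)$ and $\rho_{G'}(v^\star,v)$ — which is exactly $q$.

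For property (b) I would induct on $d_G(u,v)$. If $\rho_G(u,v)\subseteq G'$, then since it is a shortest path in $G$ lying inside $G'$ it is also a shortest path in $G'$, so $d_{G'}(u,v)=d_G(u,v)$, and by the preliminary observation (applied to $G$) its bottleneck $w_G(u,v)=(u^\star,v^\star)$ is realized by a shortest path inside $G'$; hence $w_{G'}(u,v)\succeq w_G(u,v)$. Conversely, any shortest $u$-$v$ path in $G'$ witnessing $w_{G'}(u,v)\succ w_G(u,v)$ is also a shortest $u$-$v$ path in $G$ with bottleneck $\succ w_G(u,v)$, contradicting maximality; so $w_{G'}(u,v)=w_G(u,v)=(u^\star,v^\star)$. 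Finally, by property (a) applied to $G$, the sub-paths $\rho_G(u,u^\star)$ and $\rho_G(v^\star,v)$ are portions of $\rho_G(u,v)$ and hence also contained in $G'$, and both have distance $<d_G(u,v)$; the induction hypothesis then gives $\rho_{G'}(u,u^\star)=\rho_G(u,u^\star)$ and $\rho_{G'}(v^\star,v)=\rho_G(v^\star,v)$, so $\rho_{G'}(u,v)=\rho_{G'}(u,u^\star)\cdot w_G(u,v)\cdot \rho_{G'}(v^\star,v)=\rho_G(u,v)$.

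The hard part will be the cut-and-paste (``exchange'') step used to pin down $w_{G'}(u',v')$ in property (a) and $w_{G'}(u,v)$ in property (b): one must check that the spliced object really is a \emph{shortest} path — this is exactly where additivity of distances along $\rho_{G'}(u,v)$, itself a consequence of the preliminary observation, is needed — and one must keep careful track of which edges along each piece are guaranteed to be $\succ w_{G'}(u,v)$ so that the new bottleneck genuinely exceeds it. Handling the degenerate cases $u=u^\star$, $v=v^\star$, $u'=u^\star$ or $v'=v^\star$ (where some $\rho$-factor is the empty path) is routine. Everything else is a plain induction on distances.
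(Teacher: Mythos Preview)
Your proposal is correct and follows essentially the same route as the paper: induction on the distance, splitting $\rho_{G'}(u,v)$ at its bottleneck edge $w_{G'}(u,v)=(u^\star,v^\star)$, a cut-and-paste exchange argument to force $w_{G'}(u',v')=w_{G'}(u,v)$ (resp.\ $w_{G'}(u,v)=w_G(u,v)$), and then recursing on the two halves. The only difference is expository: you make explicit the preliminary fact that $\rho_{G'}(u,v)$ is a shortest path whose bottleneck is $w_{G'}(u,v)$ and that every other edge on it is $\succ w_{G'}(u,v)$, whereas the paper uses this implicitly when asserting that the spliced path has bottleneck $\succ w$.
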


	For every vertex $u\in V$, we can compute the collection of paths $\rho_{G'}(v, u)$ for every $v\in V$. By (Property \ref{prop:consistency}), these paths form a tree rooted at $u$, and we say this tree is the \emph{incoming shortest path tree} rooted at $u$, denoted as $\Tin(u)$. Similarly, we can define the outgoing shortest path trees rooted at each vertex, denoted as $\Tout(u)$. The advantage of this tie-breaking method is:
	
	\begin{restatable}[\cite{DuanP09}]{theorem}{thmAPBSP}\label{thm:APBSP}
		Given a subgraph $G'$, the incoming and outgoing shortest path trees $\Tin(u)$ and $\Tout(u)$ for each vertex $u$ can be computed in $\tilde{O}(n^{(3+\omega)/2}M^{1/2})$ time.
	\end{restatable}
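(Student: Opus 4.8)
The plan is to recast the task as an ``all-pairs bottleneck shortest path'' computation and then read the trees off the resulting data. Conceptually, fix a target vertex $u$ and consider $\Tin(u)$ (the outgoing trees $\Tout(\cdot)$ are obtained identically on the reversed graph). Form the \emph{shortest-path subgraph toward $u$}, $H_u=(V,\{(a,b)\in E(G') : d_{G'}(a,u)=w(a,b)+d_{G'}(b,u)\})$, and give edge $(a,b)$ the ``capacity'' equal to its rank in the linear order $\prec$. Distances telescope along any directed path inside $H_u$, so a $v$-to-$u$ path lies in $H_u$ if and only if it is a shortest $v$-to-$u$ path of $G'$; since edge ranks are distinct, the bottleneck edge of $\rho_{G'}(v,u)$ is then exactly $w_{G'}(v,u)$, which equals the \emph{widest-path} value from $v$ to $u$ in $H_u$ (the maximum over $v$-to-$u$ paths of the minimum capacity). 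Moreover, by the consistency property (Property~\ref{prop:consistency}), the parent of $v$ in $\Tin(u)$ is the second vertex of $\rho_{G'}(v,u)$, which, if $w_{G'}(v,u)$ is the edge $(v^\star\to u^\star)$, equals $u^\star$ when $v=v^\star$ and the second vertex of $\rho_{G'}(v,v^\star)$ otherwise; since $d_{G'}(v,v^\star)<d_{G'}(v,u)$, a single pass over the $n^2$ pairs in non-decreasing order of distance assembles all trees in $\tilde O(n^2)$ extra time, once every $w_{G'}(v,u)$ is known.

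It thus suffices to compute $d_{G'}(a,b)$ and $w_{G'}(a,b)$ for all pairs simultaneously. I would do this with $O(\log n)$ rounds of repeated squaring of a ``combined'' product $\star$: the state $(d^{(k)}(a,b),\beta^{(k)}(a,b))$ records the least weight of an $a$-$b$ walk using $\le k$ edges and the largest bottleneck among such walks of that weight, and $(A\star B)[a][b]$ is the ordinary $(\min,+)$-value in the first coordinate together with, taken only over split points $c$ realizing it, the value $\max_c\min(\beta(a,c),\beta(c,b))$ in the second. Starting from the weighted adjacency matrix of $G'$ (with self-loops of weight $0$ and rank $+\infty$), after $\lceil\log_2(n-1)\rceil$ doublings the first coordinate equals $d_{G'}$ and the second equals $w_{G'}$; a short induction confirms this, using that positive weights force the optimal walks to be simple shortest paths. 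The first coordinate is kept cheap by the small-integer-weight APSP toolkit of Zwick and of Shoshan--Zwick: only short paths are tracked per scale, longer paths are routed through a random set of $\tilde O(n/\ell)$ hubs, and the bounded-range distance products are computed via rectangular matrix multiplication---this is the source of the $n^{(3+\omega)/2}$-type exponent and the $M$-dependence. The second coordinate is maintained by bucketing the $O(nM)$ possible distance values and, inside each bucket, invoking Duan--Pettie's subcubic $(\max,\min)$-matrix-multiplication algorithm on the submatrices attaining that value. Balancing the two kinds of rounds yields the claimed $\tilde O(n^{(3+\omega)/2}M^{1/2})$ bound.

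The step I expect to be the main obstacle is the efficient, tie-break-faithful implementation of $\star$. On one hand one must keep the weighted $(\min,+)$ part fast despite distance values growing up to $nM$---forcing Zwick's scaling and random-hub shortcuts---while checking that these randomized detours do not corrupt the deterministic bottleneck bookkeeping, i.e.\ that the $\beta$-coordinate still equals the maximum bottleneck over genuine shortest paths and hence is consistent with the canonical paths $\rho_{G'}$ (\cref{thm:property-APBSP}). On the other hand, interleaving the $(\max,\min)$ computation must not inflate the running time, which is precisely what Duan--Pettie's techniques are engineered for, but only once one argues that restricting to ``argmin'' split points is compatible with their reduction. Reconciling these two ingredients so that the composite algorithm both reproduces the $\rho_{G'}$-consistent trees and meets the time bound is the delicate part; the reduction to $H_u$ and the final $\tilde O(n^2)$ tree assembly are routine by comparison.
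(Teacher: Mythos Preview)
Your high-level plan matches the paper's: compute $w_{G'}(u,v)$ for all pairs (this is exactly the All-Pairs Bottleneck Shortest Path problem of Duan--Pettie), then assemble the trees by a single sweep over pairs in non-decreasing order of distance; the tree-assembly part is essentially identical to what the paper does.

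The gap is in your proposed realization of the combined product $\star$. Decoupling the two coordinates---running Zwick's hub-and-scale machinery for the $(\min,+)$ part and then ``bucketing the $O(nM)$ possible distance values'' and invoking a $(\max,\min)$ product inside each bucket---does not meet the stated bound: with up to $\Theta(nM)$ buckets and $\tilde{O}(n^{(3+\omega)/2})$ per $(\max,\min)$ product you overshoot by a factor of $nM$. Plain repeated squaring on square $n\times n$ matrices fares no better, since at the final doubling the distance entries range up to $nM$, and the Duan--Pettie Distance-Max-Min product then costs $\tilde{O}((nM)^{1/2}n^{(3+\omega)/2})$, which already exceeds $n^3$. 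What actually works (and what the paper does, following \cite{DuanP09}) is to keep both coordinates \emph{coupled} inside Zwick's scaling: at scale $r$ one samples a hitting set $H$ of size $\tilde{O}(nM/r)$ and computes a \emph{single} Distance-Max-Min product of the $V\times H$ and $H\times V$ matrices, whose distance entries are bounded by $r$. The rectangular inner dimension $n^s\approx nM/r$ and the entry bound $r$ trade off so that \cite[Theorem~4.3]{DuanP09} gives $\tilde{O}(n^{(3+\omega)/2}M^{1/2})$ per scale, uniformly over all $O(\log n)$ scales. This simultaneous treatment---rather than a post-hoc bucketed $(\max,\min)$ pass---is the piece your outline should replace.
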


	For completeness we will prove \cref{thm:property-APBSP} and \ref{thm:APBSP} in \cref{sec:APBSP}.
	
	\subsection{Notation}
	We mainly stick to the notation used in \cite{DuanP09a}, namely:\begin{itemize}
		\item If $p$ is a path, then $|p|$ denotes the number of edges in it, and $\|p\|$ denotes its length (i.e.~the total weight of its edges).
		\item Let $u,v\in V$, we denote $uv = \rho_G(u, v)$, i.e.~the unique shortest path from $u$ to $v$ in the original graph. For a vertex or edge failure $f$, we denote $uv\diamond f=\rho_{G-f}(u, v)$, where $G-f$ is the subgraph of $G$ obtained by removing the failure $f$. Note that if $f$ is not in the original path $uv$, then by (Property \ref{prop:subgraph}), $uv\diamond f=uv$.
		\item Let $p$ be a path from $u$ to $v$. For two vertices $a,b\in p$ such that $a$ appears earlier than $b$, we say the interval $p[a,b]$ is the subpath from $a$ to $b$, and $p(a,b)$ is the path $p[a,b]$ without its endpoints ($a$ and $b$). If the path $p$ is known in the context, then we may omit $p$ and simply write $[a,b]$ or $(a,b)$. (Property \ref{prop:consistency}) implies that, if $p$ is the path $uv$ (or $uv\diamond f$ respectively), then $p[a, b]$ is the path $ab$ (or $ab\diamond f$ respectively).
	\end{itemize}
	
	We define $\MM(n_1,n_2,n_3)$ as the complexity of multiplying an $n_1\times n_2$ matrix and an $n_2\times n_3$ matrix. Let $a,b,c$ be real numbers, we define $\omega(a,b,c)$ be the infimum over all real numbers $\alpha$ such that $\MM(n^a,n^b,n^c)=O(n^\alpha)$. For any real number $r$, we have $\omega(1,1,r)=\omega(1,r,1)=\omega(r,1,1)$ \cite{LottiR83}, and we denote $\omega(r)=\omega(1,1,r)$.
	
	We also need the following adaptation of Zwick's APSP algorithm \cite{Zwick02} (see also \cite[Corollary 3.1]{GrandoniW12}):
	\begin{theorem}\label{thm:truncated-Zwi02}
		Given an integer $r$ and a directed graph $G=(V,E)$ with edge weights in $\{1,2,\dots,M\}$, we can compute the distances between every pair of nodes whose shortest path uses at most $r$ edges, in $\tilde{O}(rM\cdot \MM(n,n/r,n))$ time.
	\end{theorem}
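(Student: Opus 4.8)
The plan is to adapt Zwick's algorithm~\cite{Zwick02} in the standard way --- repeated ``squaring'' of distance matrices guided by random bridging sets --- but truncated so that it captures exactly the shortest paths using at most $r$ edges. For an integer $s\ge 1$ and vertices $u,v$, write $D_s[u][v]$ for the minimum length of a $u$-$v$ walk that uses at most $s$ edges (and $\infty$ if there is none). Since every subpath of a shortest path is again a shortest path, $D_r[u][v]$ equals the true distance from $u$ to $v$ whenever some shortest $u$-$v$ path uses at most $r$ edges; so it suffices to compute a matrix that over-approximates $D_r$ everywhere and is exact on those pairs. The base matrix $D_1$ is the weighted adjacency matrix (with $0$'s on the diagonal), costing $O(n^2)$.

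First I would set up one recursive step, which raises the parameter from $s$ to $\lceil 3s/2\rceil$. Draw $B_s\subseteq V$ by including each vertex independently with probability $\min\{1,\Theta(\ln n/s)\}$, so that $|B_s|=\tilde{O}(n/s)$ w.h.p.; from the current matrix $A_s$ (an over-approximation of $D_s$ that is exact on all pairs whose shortest path has at most $s$ edges) compute
\[
  A'[u][v] \;=\; \min\Bigl\{\,A_s[u][v],\ \min_{b\in B_s}\bigl(A_s[u][b]+A_s[b][v]\bigr)\Bigr\},
\]
and then \emph{cap} every entry at $\lceil 3s/2\rceil\cdot M$ to obtain $A_{\lceil 3s/2\rceil}$. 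The correctness claim to verify is that, w.h.p., $A_{\lceil 3s/2\rceil}[u][v]$ equals the true distance for every pair whose shortest path uses at most $\lceil 3s/2\rceil$ edges. The inequality ``$\ge$'' is immediate: each quantity on the right is the length of an actual walk, and for a relevant pair the true distance is at most $\lceil 3s/2\rceil M$, so the cap does no harm. For ``$\le$'', fix such a pair with shortest path $P=(v_0,\dots,v_k)$; if $k\le s$ the term $A_s[u][v]$ already works, and if $s<k\le\lceil 3s/2\rceil$ then the ``middle window'' $\{v_i:\ k-s\le i\le s\}$ contains at least $\lceil s/2\rceil\ge 1$ internal vertices, so $B_s$ hits it w.h.p., and for any hit vertex $b=v_i$ both $P[u,b]$ and $P[b,v]$ are shortest paths with at most $s$ edges, whence (by the inductive exactness of $A_s$) $A_s[u][b]+A_s[b][v]=\|P[u,b]\|+\|P[b,v]\|=\|P\|$. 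A union bound over the at most $n^2$ pairs and the $O(\log r)$ rounds, with the hidden constant in the sampling probability chosen large enough, keeps the total failure probability below $n^{-C}$. Iterating until the parameter first exceeds $r$ (which takes $O(\log r)$ rounds) produces the required matrix.

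For the running time, the point of the cap is that at the round with parameter $s$ all entries of $A_s$ lie in $\{0,1,\dots,sM\}\cup\{\infty\}$, and the only expensive operation is a min-plus product of an $n\times|B_s|$ matrix by a $|B_s|\times n$ matrix. By the standard reduction of bounded-entry min-plus (distance) products to integer matrix multiplication~\cite{Zwick02} --- encode an entry $x$ as $(|B_s|+1)^{\,sM-x}$ (and $\infty$ as $0$), multiply over the integers, and recover each min-plus entry from the top-order digit of the result in base $|B_s|+1$ --- this costs $\tilde{O}\bigl(sM\cdot\MM(n,|B_s|,n)\bigr)=\tilde{O}\bigl(sM\cdot\MM(n,n/s,n)\bigr)$. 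It then remains to observe that $s\mapsto s\cdot\MM(n,n/s,n)$ is non-decreasing up to constant factors: splitting the inner dimension of a product with inner size $n/s$ into $\lceil s'/s\rceil$ blocks of size $n/s'$ gives $\MM(n,n/s,n)=O\bigl((s'/s)\,\MM(n,n/s',n)\bigr)$ for $s'\ge s$. Hence the geometric sum of the round costs is dominated by the last round, where $s=\Theta(r)$, and the total is $\tilde{O}\bigl(rM\cdot\MM(n,n/r,n)\bigr)$ --- which absorbs the $O(n^2)$ base cost and the $\tilde{O}(Mn^\omega)$ cost of the first few rounds (where $B_s=V$), since $\MM(n,n,n)\le r\cdot\MM(n,n/r,n)$.

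The step I expect to be the main obstacle, beyond routine bookkeeping, is precisely controlling the magnitude of the matrix entries: without the cap, an entry of $A_s$ could correspond to a walk with as many as $2^{\Theta(\log r)}=\mathrm{poly}(r)$ edges, so its value could be as large as $\mathrm{poly}(r)\cdot M$ rather than $O(rM)$, which would wreck the per-round cost bound. Capping after every round fixes this, but one has to re-examine correctness to be sure the cap never discards a value that a later round genuinely needs; growing the parameter by the factor $3/2$ rather than $2$ is what makes this painless, because it keeps the ``middle window'' of size $\Omega(s)$ and hence hittable by a sample of density $\tilde{\Theta}(1/s)$ with high probability.
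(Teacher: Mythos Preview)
Your proposal is correct and follows essentially the same approach as the paper: the paper's proof sketch simply says to run the first $\lceil\log_{3/2}r\rceil$ iterations of Zwick's \textsf{rand-short-path} and cites \cite[Lemma~4.2]{Zwick02} for correctness, and what you have written is precisely a fleshed-out account of those iterations (random bridging sets of size $\tilde{O}(n/s)$, bounded-entry min-plus via integer matrix multiplication, and the capping to keep entries in $\{0,\dots,O(sM)\}$). Your monotonicity argument for $s\mapsto s\cdot\MM(n,n/s,n)$ and your handling of the cap are both fine, so there is nothing to add.
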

	\begin{proof}[Proof Sketch]
		We simply run the first $\lceil\log_{3/2}r\rceil$ iterations of the algorithm {\bf rand-short-path} in \cite{Zwick02}. The correctness of this algorithm is guaranteed by \cite[Lemma 4.2]{Zwick02}.
	\end{proof}
	
	\section{Constructing a DSO in $\tilde{O}(n^{2.7233}M)$ Time}\label{sec:some-DSO}
	
	In this section, we prove \cref{thm:DSO}.
	
	{
		\def\footnote#1{}
		\ThmDSO*
	}
	
	Given an integer $r$ and a graph $G=(V,E)$, we define an \emph{$r$-truncated DSO} to be a data structure that, when given a query $(u,v,f)$, outputs the value $\min\{\|uv\diamond f\|,r\}$. In other words, an $r$-truncated DSO is a DSO which only needs to be correct when the path $uv\diamond f$ has length at most $r$. If this length is greater than $r$, it outputs $r$ instead.
	
	Inspired by Zwick's APSP algorithm \cite{Zwick02}, our main idea is to compute an $r$-truncated DSO for every $r=(3/2)^i$. Our high-level strategies for small $r$ and large $r$ are completely different as described below.
	
	When $r$ is small, the sampling approach in \cite{WeimannY13, GrandoniW12} already suffices. Fix a particular query $(u,v,f)$, we assume that $f$ is a vertex failure, and $\|uv\diamond f\|\le r$. In particular, since the edge weights are positive, there are at most $r+1$ vertices in the path $uv\diamond f$. Suppose we sample a graph by discarding each vertex w.p.~$1/r$. With probability $\Omega(1/r)$, the resulting graph would ``capture'' this query in the sense that $f$ is not in it but $uv\diamond f$ is completely included in it. Therefore, if we take $\tilde{O}(r)$ independent samples, and compute APSP for each sampled subgraph, we can deal with every vertex-failure query w.h.p. If $f$ is an edge failure, we simply discard every \emph{edge} (instead of vertex) in every subgraph w.p.~$1/r$, and we can still deal with every edge-failure query w.h.p.
	
	For large $r$, our idea is to compute a $(3/2)r$-truncated DSO from an $r$-truncated DSO. More precisely, given an $r$-truncated DSO with $O(1)$ query time, we can compute a $(3/2)r$-truncated DSO with $\tilde{O}(Mn/r)$ query time as follows. First we sample a bridging set (see~\cite{Zwick02}) $H$ of size $\tilde{O}(Mn/r)$. Let $(u,v,f)$ be a query such that $r\le \|uv\diamond f\|\le (3/2)r$, then w.h.p.~there is a ``bridging vertex'' $h\in H$ such that $h$ is on the path $uv\diamond f$, and both of the queries $(u,h,f)$ and $(h,v,f)$ are captured by the $r$-truncated DSO. If we iterate through $H$, we can answer the query $(u,v,f)$ in $\tilde{O}(Mn/r)$ time. Then we use an ``$r$-truncated'' version of \cref{lemma:complexity-of-DSO} to transform this $(3/2)r$-truncated DSO with slow query time into a new one with $O(1)$ query time.
	
	Now, we present how to implement the above ideas in detail.
	
	\subsection{Case I: $r$ is Small}\label{sec:r-small}
	\def\e{\mathsf{e}}
	\def\v{\mathsf{v}}
	Let $\tilde{r}=\lceil 8Cr\ln n\rceil$, where $C$ is a large enough constant. We independently sample $2\tilde{r}$ graphs $G_1^\e,G_2^\e,\dots,G_{\tilde{r}}^\e$ and $G_1^\v,G_2^\v,\dots,G_{\tilde{r}}^\v$. (The superscripts $\e$ and $\v$ stand for ``edge'' and ``vertex'' respectively.) Each graph is sampled as follows:
	\begin{itemize}
		\item The vertex set of each $G_i^\e$ is equal to the original vertex set $V$, and the edge set of each $G_i^\e$ is sampled by including every edge independently w.p.~$1-1/r$.
		\item The vertex set of each $G_i^\v$ is sampled by including every vertex independently w.p.~$1-1/r$, and the graph $G_i^\v$ is the induced subgraph of $G$ on vertices $V(G_i^\v)$.
	\end{itemize}
	Then, for each $1\le i\le \tilde{r}$, we compute all-pairs shortest paths of the graph $G_i^\e$ and $G_i^\v$, but we only compute the shortest paths that use at most $r$ edges. By \cref{thm:truncated-Zwi02}, this step can be done in $\tilde{O}(rM\cdot \MM(n,n/r,n))$ time for each graph. Alternatively, if the input graph is undirected, then this step can be done in $\tilde{O}(Mn^\omega)$ time \cite{Seidel95, ShoshanZ99} for each graph.
	
	Consider a query $(u,v,f)$, suppose $f$ is a vertex failure, and assume that $\|uv\diamond f\|\le r$. Let $1\le i\le \tilde{r}$, we say $i$ is \emph{good} for the query $(u,v,f)$, if both of the following hold.\begin{itemize}
		\item The graph $G_i^\v$ does not contain $f$.
		\item The graph $G_i^\v$ contains the entire path $uv\diamond f$.
	\end{itemize}
	
	For every $i$ ($1\le i\le \tilde{r}$), the probability that $i$ is good for the particular query $(u,v,f)$ is at least
	\[\left(1/r\right)\cdot \left(1-1/r\right)^{r+1}\ge 1/8r\text{ (if $r\ge 2$).}\]
	
	Given a query $(u,v,f)$, where $f$ is a vertex failure, we iterate through every $i$ such that $f\not\in V(G_i^\v)$, and take the smallest value among the distances from $u$ to $v$ in these graphs $G_i^\v$. With high probability, there are only $\tilde{O}(1)$ valid indices $i$ such that $f\not\in V(G_i^\v)$, and we can preprocess this set of indices $i$ for every $f\in V$. Therefore the query time is $\tilde{O}(1)$.
	
	The algorithm succeeds at a query $(u,v,f)$ if there is an $i$ that is good for $(u,v,f)$. Since the graphs $G_i^\v$ are independent, the probability that there is an $i$ good for $(u,v,f)$ is at least
	\[1-\left(1-1/8r\right)^{\tilde{r}}\ge 1-1/n^C.\]
	By a union bound over all $n^3$ triples of possible queries $(u,v,f)$, it follows that our data structure is correct w.p.~at least $1-1/n^{C-3}$, which is a high probability.
	
	If $f$ is an edge failure, we look at the graphs $G_i^\e$ instead of $G_i^\v$. We say $i$ is good if the graph $G_i^\e$ does not contain $f$ but contains the entire path $uv\diamond f$. Again, if we fix $(u, v, f)$, then the probability that a particular $i$ is good is at least $\Omega(1/r)$, so w.h.p.~there is some $i$ that is good for $(u, v, f)$. By a union bound over all $O(n^4)$ triples $(u, v, f)$, our data structure is still correct w.h.p.
	
	In conclusion, there is an $r$-truncated DSO with $\tilde{O}(1)$ query time, whose preprocessing time is
	$\tilde{O}(\tilde{r}\cdot rM\cdot \MM(n,n/r,n))$
	for directed graphs, and
	$\tilde{O}(\tilde{r}\cdot Mn^\omega)$
	for undirected graphs.
	
	\subsection{An Observation}\label{sec:obs}
	We need the following observation (``$r$-truncated'' version of \cref{lemma:complexity-of-DSO}), which roughly states that given an $r$-truncated DSO with preprocessing time $P$ and query time $Q$, we can build an $r$-truncated DSO with preprocessing time $P+\tilde{O}(n^2)\cdot Q$ and query time $O(1)$. More formally, we have:
	
	\begin{restatable}{observation}{ObsTruncatedBK}		\label{ob:fast-r-truncated-DSO-formal}
		Let $r$ be an integer, $G=(V,E)$ be an input graph, and $\caD$ be an arbitrary $r$-truncated DSO. We can construct $\Fast(\caD)$, which is an $r$-truncated DSO with $O(1)$ query time and a preprocessing algorithm as follows.\begin{itemize}
			\item It takes as input a graph $G$, the distance matrix of $G$, and the (incoming and outgoing) shortest path trees rooted at each vertex.
			\item Then it invokes the preprocessing algorithm of $\caD$ on the input graph $G$.
			\item At last, it makes $\tilde{O}(n^2)$ queries to $\caD$, and spends $\tilde{O}(n^2)$ additional time to finish the preprocessing algorithm.
		\end{itemize}
	\end{restatable}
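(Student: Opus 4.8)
The plan is to follow the DSO construction of Bernstein and Karger~\cite{BernsteinK09} (which refines that of \cite{DemetrescuTCR08}): its preprocessing picks a carefully chosen family of $\tilde O(n^2)$ replacement‑distance queries, batch‑computes their answers, and assembles an $O(1)$‑query oracle from them; I would keep this outline verbatim but obtain the answers by individual calls to the black box $\caD$ instead of by a batch computation, and I would run the whole argument with respect to the canonical paths $\rho_{G-f}$ so that it survives non‑unique shortest paths.

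First I would carry out the two initialization bullets. For the first, compute the distance matrix $\bigl(d(u,v)\bigr)_{u,v}$ and the trees $\Tin(u),\Tout(u)$ for all $u$ (by \cref{thm:APBSP}), and build on top of these $2n$ trees the standard $O(1)$‑time level‑ancestor and LCA structures in $O(n^2)$ extra time. These let me, in $O(1)$ time per query, decide whether a given vertex or edge $f$ lies on $uv$ and at what position, retrieve the $i$‑th vertex $v_i$ of $uv$ or the endpoints of any subpath $uv[a,b]$, and read off $\|uv[a,b]\|=d(a,b)$; here (Property~\ref{prop:consistency}) is exactly what makes $uv[a,b]$ literally equal to the canonical path $\rho_G(a,b)$, so that $d(a,b)$ is its length. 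The second bullet is just invoking $\caD$'s preprocessing on $G$.

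For the third bullet I would issue, for each ordered pair $(u,v)$ with $d(u,v)\le r$, a carefully chosen family of $O(\log n)$ queries to $\caD$ — a natural choice, writing $uv=(v_0{=}u,\dots,v_k{=}v)$, is to fail, for each $j$ with $2^j\le k$, the vertices $v_{2^j}$ and $v_{k-2^j}$ and the edges $(v_{2^j-1},v_{2^j})$ and $(v_{k-2^j},v_{k-2^j+1})$ — for a total of $\tilde O(n^2)$ queries, costing $\tilde O(n^2)\cdot Q$ time. Why $O(\log n)$ per pair can suffice is the Bernstein--Karger replacement‑path decomposition: if $f$ is on $uv$ and $\|uv\diamond f\|\le r$, then $uv\diamond f$ has a last common vertex $v_a$ with $uv$ before $f$ and a first common vertex $v_b$ after $f$, and by (Property~\ref{prop:consistency})
\[ uv\diamond f \;=\; uv[u,v_a]\,\oplus\,(v_av_b\diamond f)\,\oplus\, uv[v_b,v], \]
with $v_av_b\diamond f=\rho_{G-f}(v_a,v_b)$ internally disjoint from $uv$ and $f$ interior to $v_av_b$; hence $\|uv\diamond f\|=d(u,v_a)+\|v_av_b\diamond f\|+d(v_b,v)$. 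Splitting the detour $v_av_b\diamond f$ at an edge‑midpoint $m$ and applying (Property~\ref{prop:consistency}) again, each half is a canonical replacement path on at most $\lceil|v_av_b\diamond f|/2\rceil$ edges, and whenever $f$ lies off the corresponding shortest subpath, (Property~\ref{prop:subgraph}) identifies that half with a plain canonical shortest path of known length. Iterating, a replacement path on $\le\min\{r,n\}$ edges breaks into $O(\log n)$ subpaths, each a known distance or a replacement distance; the analysis of \cite{DemetrescuTCR08, BernsteinK09} shows that across all failures $f$ on a fixed $uv$ these reduce to a per‑pair family of $O(\log n)$ replacement distances, organized into an $O(\log n)$‑word summary from which a single query is answered by an $O(1)$‑size arithmetic expression in the stored quantities. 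Assembling these summaries from the $\caD$‑answers and the distance matrix is the ``$\tilde O(n^2)$ additional time''. At query time, given $(u,v,f)$: if $f\notin uv$ output $\min\{d(u,v),r\}$; else if $d(u,v)>r$ output $r$; else evaluate the expression and return the minimum of it with $r$.

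The $r$‑truncation is harmless: whenever the true answer $\|uv\diamond f\|$ is $\le r$, every pair $(v_a,v_b)$ about which the expression queries $\caD$ is spanned by a subpath of $uv\diamond f$, so $\|v_av_b\diamond f\|\le r$ and $\caD$'s answers are exact and the expression evaluates exactly; whenever the true answer exceeds $r$, each term of the expression is either the length of an $f$‑avoiding subpath (hence $\ge$ the true answer $>r$) or a $\caD$‑value clamped to $r$, so the expression is $\ge r$ and the final minimum with $r$ is correct. The step I expect to require the most care, and would have to reproduce in detail, is pinning down exactly which $O(\log n)$ per‑pair queries the compact $O(1)$‑query representation of \cite{DemetrescuTCR08} actually consumes — this is the ``carefully chosen'' of \cref{lemma:complexity-of-DSO} — while simultaneously running the whole decomposition through the canonical paths: (Property~\ref{prop:consistency}) is precisely what lets the recursion descend, since the portion of $uv\diamond f$ between two of its vertices is itself $\rho_{G-f}$ of that pair and therefore matches the distance $\caD$ returns, and (Property~\ref{prop:subgraph}) is precisely what lets an $f$‑avoiding sub‑shortest‑path be recognized as the canonical shortest path whose length sits in the precomputed distance matrix.
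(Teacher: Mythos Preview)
Your high-level plan --- reuse the Bernstein--Karger preprocessing but feed it answers from $\caD$ --- is exactly right, and your handling of non-unique shortest paths via (Property~\ref{prop:consistency}) and (Property~\ref{prop:subgraph}) matches the paper. The gap is in the ``carefully chosen queries'' step, where your concrete proposal does not work and your appeal to \cite{DemetrescuTCR08,BernsteinK09} conflates two different mechanisms.

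You assert that $O(\log n)$ single-failure queries \emph{per pair} (failing $v_{2^j}$ and $v_{k-2^j}$) suffice for an $O(1)$-time lookup. Neither cited paper supports this. The oracle of \cite{DemetrescuTCR08} does store $O(\log n)$ values per pair, but those values are \emph{interval-avoiding} distances (shortest $u$--$v$ path avoiding the first or last $2^i$ edges of $uv$), which are not single-failure quantities and cannot be obtained from $\caD$; your ``split the detour at its midpoint $m$'' recursion is precisely their recursion, but it needs those interval quantities, and moreover $m$ lies on the \emph{detour} and depends on $f$, so nothing about $m$ is precomputed. The oracle of \cite{BernsteinK09}, which the paper actually follows, does use only single-failure queries, but not $O(\log n)$ of them per pair: it assigns random priorities $c(v)$, and for a pair $(u,v)$ it stores $\|uv\diamond x\|_r$ for \emph{every} failure $x$ among the first (and last) $\tilde O(2^{c(u)})$ vertices and edges of $uv$. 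A single pair with a high-priority endpoint may consume $\tilde O(n)$ queries; the $\tilde O(n^2)$ total comes from the charging argument that only $\tilde O(n/2^c)$ vertices have priority $c$.

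What you are missing, concretely, is: (i) the random-priority scheme and the resulting ``key vertices'' $k_i$ on each $uv$; (ii) the triple-path identity $\|uv\diamond f\|_r=\min\{\|uk_{i+1}\diamond f\|_r+\|k_{i+1}v\|,\ \|uk_i\|+\|k_iv\diamond f\|_r,\ \|uv\diamond y\|_r,\ r\}$, where $y$ maximizes $\|uv\diamond y\|_r$ over $[k_i,k_{i+1}]$; and (iii) the nontrivial step of computing that maximizer $y$ for every consecutive key-vertex interval using only $\tilde O(n^2)$ further $\caD$-queries, which the paper does via a binary search supported by range-maximum structures built on the already-stored (Data~\ref{item:avoid-priority}). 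Your geometric-position queries give none of these pieces.
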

	
	We emphasize the following technical details that are not reflected in the informal statement of \cref{lemma:complexity-of-DSO}. First, besides the input graph $G$, we also need the distance matrix and shortest path trees of $G$ (henceforth the ``APSP data'' of $G$) before using $\caD$. We can compute the APSP data using \cref{thm:APBSP}. Second, the preprocessing algorithm of $\caD$ is called \emph{only once}, and on the same graph $G$ (we already have the APSP data of $G$). The reason that the second detail is important is: Suppose we have another routine that given an $r$-truncated DSO $\caD$, constructs $\Extend(\caD)$ which is a $(3/2)r$-truncated DSO with a possibly large query time. Then given a $1$-truncated DSO $\caD^{\sf start}$, we can construct a (normal) DSO as follows:
	\[\caD^{\sf final} = \underbrace{\Fast(\Extend(\Fast(\Extend(\dots \Extend(\caD^{\sf start})))))}_{O(\log n)\text{ times}}.\]
	
	However, even if the preprocessing algorithm of $\Fast(\caD)$ invokes the preprocessing algorithm of $\caD$ \emph{twice}, the preprocessing algorithm of $\caD^{\sf final}$ would invoke \emph{a polynomial times} the preprocessing algorithm of $\caD^{\sf start}$, which is too many. In contrast, if the preprocessing algorithms of both $\Fast(\caD)$ and $\Extend(\caD)$ only invoke the preprocessing algorithm of $\caD$ once, then the preprocessing algorithm of $\caD^{\sf final}$ would also invoke the preprocessing algorithm of $\caD^{\sf start}$ only once, which is okay.
	
	\subsection{Case II: $r$ is Large}\label{sec:r-large}
	Suppose we have an $r$-truncated DSO $\caD$, which has preprocessing time $P$ and query time $O(1)$. We show how to construct a $(3/2)r$-truncated DSO, which we name as $\Extend(\caD)$, with preprocessing time $P+O(n^2)$ and query time $\tilde{O}(nM/r)$. This is done by the following bridging set argument.
	
	\def\m{\mathsf{mid}}
	
	Let $\caP$ be the set of paths $uv\diamond f$, for every $u,v\in V$ and $f\in(V\cup E)$ such that $r\le \|uv\diamond f\|<(3/2)r$. This corresponds to the paths that $\caD$ cannot deal with, but $\Extend(\caD)$ has to output the correct answer. Let $p=uv\diamond f\in\caP$, $\m(p)$ be the set of vertices $y\in p$ such that $\|p[u,y]\|<r$ and $\|p[y,v]\|<r$. (See \cref{fig:mid}.) For every $y\in\m(p)$, since $p[u,y] = uy\diamond f$ and $p[y,v] = yv\diamond f$, it follows that $\caD$ can find $\|uy\diamond f\|$ and $\|yv\diamond f\|$ correctly. Moreover, $|\m(p)|\ge r/3M$.
	
	\begin{figure}[H]
		\centering
		\includegraphics[width=0.9\linewidth]{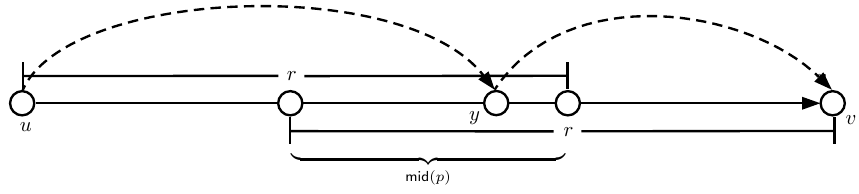}
		\caption{Example of a path $p=uv\diamond f$. If we can find a vertex $y\in \m(p)$, then we can use $\caD$ to compute $\|uy\diamond f\|$ and $\|yv\diamond f\|$, thus to compute the length of $p$.}\label{fig:mid}
	\end{figure}
	
	Fix a large enough constant $C$, the preprocessing algorithm of $\Extend(\caD)$ is as follows: We preprocess $\caD$, and then randomly sample a set $H$ of vertices, where every vertex $v\in V$ is in $H$ with probability $\min\{1,3CM\ln n/r\}$ independently. We have $|H|=\tilde{O}(nM/r)$ w.h.p.
	
	Fix $u,v\in V$ and $f\in (V\cup E)$, suppose $p=uv\diamond f$ and $r\le \|p\|<(3/2)r$. Then the probability that $H$ hits $\m(p)$ (i.e.~$H\cap \m(p)\ne\varnothing$) is at least
	\[1-(1-3CM\ln n/r)^{r/3M}\ge 1-1/n^C.\]
	
	By a union bound over $O(n^4)$ paths in $\caP$, it follows that w.h.p.~$H$ hits $\m(p)$ for every path $p\in\caP$.
	
	The query algorithm for $\Extend(\caD)$ is as follows: Given a query $(u,v,f)$, if $\caD(u,v,f)<r$, then we output $\caD(u,v,f)$; otherwise we output
	\[\min\left\{(3/2)r,\min_{h\in H, \caD(u,h,f)<r, \caD(h,v,f)<r}\left\{\caD(u,h,f)+\caD(h,v,f)\right\}\right\}.\]
	It is easy to see that $\Extend(\caD)$ is a correct $(3/2)r$-truncated DSO, has preprocessing time $P+O(n^2)$ and query time $\tilde{O}(nM/r)$.
	
	\subsection{Putting It Together}\label{sec:putting-it-together}
	Let $a\in[0,1]$ be a constant that we pick later, and $r=n^a$. We first compute the APSP data using \cref{thm:APBSP}, which costs $\tilde{O}(n^{(3+\omega)/2}M^{1/2})$ time and will not be the bottleneck. Then we invoke \cref{sec:r-small} to build an $r$-truncated DSO $\caD^0$ for $r=n^a$, which costs $\tilde{O}(r^2M\cdot \MM(n,n/r,n))$ time for directed graphs or $\tilde{O}(r\cdot Mn^\omega)$ for undirected graphs. Then for every $1\le i\le \lceil \log_{3/2}(Mn/r)\rceil$, suppose we have an $r(3/2)^{i-1}$-truncated DSO $\caD^{i-1}$, we can construct $\caD^i=\Fast(\Extend(\caD^{i-1}))$ which is an $r(3/2)^i$-truncated DSO. This step costs $\tilde{O}(n^3M/(r(3/2)^i))$ time. The preprocessing algorithm terminates when $i=i_\star=\lceil\log_{3/2}(Mn/r)\rceil=O(\log n)$, and we obtain an $r(3/2)^{i_\star}$-truncated DSO which is a (normal) DSO.
	
	\subparagraph{Case 1: the input graph is undirected.} The total preprocessing time is
	\[\tilde{O}\mleft(r\cdot Mn^\omega+n^3M/r\mright)=\tilde{O}\mleft(n^{\max\{\omega+a,3-a\}}M\mright).\]
	When $a=(3-\omega)/2$, this time complexity is $\tilde{O}(n^{(3+\omega)/2}M)=\tilde{O}(n^{2.6865}M)$.
	
	Therefore, given an undirected graph $G=(V,E)$, there is a DSO with $\tilde{O}(n^{2.6865}M)$ preprocessing time and $O(1)$ query time.
	
	\subparagraph{Case 2: the input graph is directed.} The total preprocessing time is
	\begin{equation*}
		\tilde{O}\mleft(r^2M\cdot \MM(n,n/r,n)+n^3M/r\mright)=\tilde{O}\mleft(n^{2a+\omega(1-a)}M+n^{3-a}M\mright).%
	\end{equation*}
	(Recall that $\omega(1-a)$ is the exponent of multiplying an $n\times n^{1-a}$ matrix and an $n^{1-a}\times n$ matrix.)
	
	Let $a=0.276724$, then $1-a=0.723276$. By convexity of the function $\omega(\cdot)$ \cite{LottiR83}, we have
	\[\omega(1-a)\le\frac{(a-0.25)\omega(0.7)+(0.3-a)\omega(0.75)}{0.75-0.7}.\]
	We substitute $\omega(0.7)\le 2.154399$ and $\omega(0.75)\le 2.187543$ \cite{GallU18}, and obtain:
	\[\omega(1-a)\le 20\cdot \left((a-0.25)\cdot 2.154399+(0.3-a)\cdot 2.187543\right)\le 2.169829.\]
	
	Therefore, given a directed graph $G=(V,E)$, there is a DSO with
	\[\tilde{O}\mleft(n^{\max\{2a+\omega(1-a),3-a\}}M\mright)=\tilde{O}(n^{2.723277}M)\]
	preprocessing time and $O(1)$ query time.
	
	\section{Proof of \cref{ob:fast-r-truncated-DSO-formal} and \cref{lemma:complexity-of-DSO}}\label{sec:proof-of-complexity}
	In this section, we prove \cref{ob:fast-r-truncated-DSO-formal}. Note that \cref{lemma:complexity-of-DSO} follows from \cref{ob:fast-r-truncated-DSO-formal} by setting $r=+\infty$.
	
	\ObsTruncatedBK*
	
	\subsection{The Preprocessing Algorithm}
	
	We review and slightly modify the preprocessing algorithm of \cite{BernsteinK09}. For convenience, we denote $\|p\|_r=\min\{\|p\|,r\}$ for any path $p$ and number $r$.
	
	\subparagraph{Assigning priorities.} We assign each vertex a \emph{priority}, which is independently sampled from the following distribution: for any positive integer $c$, each vertex has priority $c$ w.p.~$1/2^c$. Denote $c(v)$ the priority of the vertex $v$. With high probability, all of the following are true:\begin{itemize}
		\item The maximum priority is $O(\log n)$.
		\item For every $c\le O(\log n)$, there are $\tilde{O}(n/2^c)$ vertices with priority $c$.
		\item Let $C$ be a large enough constant. For every shortest path $uv$ with at least $C\cdot 2^c\log n$ edges, there is a vertex on $uv$ whose priority is greater than $c$.
	\end{itemize}
	
	In the following discussions, we will assume that all of the above assumptions hold.
	
	Fix a pair $u,v\in V$, let $s_i$ be the first vertex in $uv$ with priority $\ge i$, and $t_i$ be the last such vertex. Then we can write the path $uv$ as
	\[u\To s_1\To s_2\To\dots\To s_{O(\log n)} \To t_{O(\log n)} \To\dots \To t_1\To v.\]
	We say that the vertices $u,v,s_i,t_i$ are \emph{key vertices}, and the $i$-th key vertex is denoted as $k_i$. Then the path $uv$ can also be written as
	\[u=k_0\To k_1\To\dots\To k_{O(\log n)}=v.\]
	
	It is important to see that 
	\begin{equation}
	|k_ik_{i+1}|\le C\cdot 2^{\min\{c(k_i),c(k_{i+1})\}}\log n\label{eq:cover}
	\end{equation}
	for every valid $i$, as otherwise there will be another key vertex between $k_i$ and $k_{i+1}$.
	
	\subparagraph{Data structures for quick location.} Suppose we are given a query $(u,v,f)$, the first thing we should do is to ``locate'' $f$, i.e.~find the key vertices $k_i,k_{i+1}\in uv$ such that $f\in k_ik_{i+1}$. We will utilize the following data (see also~\cite{BernsteinK08}).
	\begin{itemize}
		\item $\CL[u,v,c]$ (for ``center left''): the first vertex in $uv$ with priority at least $c$;
		\item $\CR[u,v,c]$ (for ``center right''): the last vertex in $uv$ with priority at least $c$; and
		\item $\BCP[u,v]$ (for ``biggest center priority''): the maximum priority of any vertex on $uv$.
	\end{itemize}

	It is easy to compute these data in $\tilde{O}(n^2)$ time: for every $u\in V$, we perform a depth-first search on the outgoing shortest path tree $\Tout(u)$ to compute $\BCP[u, \cdot]$; for every $u\in V$ and each priority $c$, we also perform a depth-first search on $\Tout(u)$ to compute $\CL[u, \cdot, c]$ and $\CR[u, \cdot, c]$.

	In addition, for every $u,v\in V$, we store the key vertices on $uv$ into a hash table of size $O(\log n)$. Given a (vertex) failure $f$, we can output whether $f$ is among these key vertices on $uv$ in $O(1)$ worst-case time \cite{FredmanKS82}.
	
	\subparagraph{Data structures for avoiding a failure.} We use $\caD$ to preprocess the input graph. Then we compute the following data:\begin{enumerate}[({Data} a)]
		\item For every $u,v\in V$, and every $1\le i\le \min\{C\cdot 2^{c(u)}\log n, |uv|\}$, let $x_i$ be the $i$-th vertex in the path $uv$. (Here $u$ is the $0$-th vertex.) We compute and store the value $\|uv\diamond x_i\|_r$. Also, let $e_i$ be the edge from $x_{i-1}$ to $x_i$, we compute and store $\|uv\diamond e_i\|_r$. Symmetrically, let $x_{-i}$ be the \emph{last} $i$-th vertex in the path $vu$ (not $uv$!), and $e_{-i}$ be the edge from $x_{-i}$ to $x_{-(i-1)}$. For every $1\le i\le \min\{C\cdot 2^{c(u)}\log n, |vu|\}$, we compute and store $\|vu\diamond x_{-i}\|_r$ and $\|vu\diamond e_{-i}\|_r$. \label{item:avoid-priority}
		\item For every $u,v\in V$ and consecutive key vertices $k_i,k_{i+1}\in uv$ such that $k_i\ne u$ and $k_{i+1}\ne v$, let $y$ be the vertex in the portion $k_i\To k_{i+1}$ that maximizes $\|uv\diamond y\|_r$. We compute and store $\|uv\diamond y\|_r$.\label{item:avoid-bottleneck}
		\item For every $u,v\in V$ and key vertex $k_i\in uv$, we compute and store $\|uv\diamond k_i\|_r$.\label{item:avoid-key}
	\end{enumerate}
	
	For each priority $c\le\tilde{O}(1)$, there are $\tilde{O}(n/2^c)$ vertices $u$ whose priority is exactly $c$. In (Data \ref{item:avoid-priority}), we make $\tilde{O}(n2^c)$ queries for each such $u$ ($\tilde{O}(2^c)$ queries for each $v\in V$). Therefore in total, we make $\tilde{O}(n^2)$ queries in (Data \ref{item:avoid-priority}). We will show in \cref{sec:find-y} that we can compute (Data \ref{item:avoid-bottleneck}) using $\tilde{O}(n^2)$ queries to $\caD$ and $\tilde{O}(n^2)$ additional time. (Data \ref{item:avoid-key}) can be computed in $\tilde{O}(n^2)$ queries easily.

	\subsection{The Query Algorithm}\label{sec:BK09-query}
	Let $(u,v,f)$ be a query. We first check whether $f\in uv$ in the shortest path trees; if $f\not\in uv$, then it is easy to see that $\|uv\diamond f\|_r=\|uv\|_r$.
	
	If $f$ is a vertex failure, we check whether $f$ is a key vertex on $uv$ (that is, $f=k_i$ for some $i$), using the hash tables. If this is the case, we return $\|uv\diamond f\|_r$ stored in (Data \ref{item:avoid-key}) immediately.
	
	Otherwise, we start by finding two consecutive key vertices $k_i,k_{i+1}\in uv$ such that $f\in k_ik_{i+1}$. Recall that, if $\ell$ is the biggest priority of any vertex on $uv$, then the key vertices on $uv$ are
	\[(u=){\sf CL}[u,v,1]\To {\sf CL}[u,v,2]\To\dots\To {\sf CL}[u,v,\ell]\To{\sf CR}[u,v,\ell]\To\dots\To {\sf CR}[u,v,2]\To{\sf CR}[u,v,1](=v).\]
	
	Denote $a$ and $b$ as the ``tail'' and ``head'' of $f$ respectively. In particular, if $f$ is a vertex failure then $a=b=f$; if $f$ is an edge failure then it is an edge from $a$ to $b$. We can find $k_i$ in $O(1)$ time using the following procedure:
	\begin{itemize}
		\item If $\BCP[b, v] = \ell$, then $f$ is in the range $(u, \CR[u, v, \ell])$, so we have $k_i = \CL[u, v, \BCP[u, a]]$.
		\item Otherwise, $f$ is in the range $(\CR[u, v, \ell], v)$ and we can see that $k_i = \CR[u, v, \BCP[b, v] + 1]$.
	\end{itemize}

	We can find $k_{i+1}$ similarly. By Eq.~\eqref{eq:cover}, if $k_i=u$, then $|ub|\le C\cdot 2^{c(u)}\log n$, and we can look up the value $\|uv\diamond f\|_r$ from (Data \ref{item:avoid-priority}) directly. Similarly, if $k_{i+1}=v$ then we can also look up $\|uv\diamond f\|_r$ from (Data \ref{item:avoid-priority}).
	
	Now we assume that $k_i\ne u$ and $k_{i+1}\ne v$. A crucial observation is that
	\begin{equation}
	\|uv\diamond f\|=\min\{\|uk_{i+1}\diamond f\|+\|k_{i+1}v\|, \|uk_i\|+\|k_iv\diamond f\|,\|uv\diamond y\|\},
	\label{eq:triple-path}
	\end{equation}
	where $y$ is the vertex in $[k_i,k_{i+1}]$ that maximizes $\|uv\diamond y\|$. The proof of Eq.~\eqref{eq:triple-path} is as follows:\begin{enumerate}[(i)]
		\item If $uv\diamond f$ goes through $k_i$, then $\|uv\diamond f\|=\|uk_i\|+\|k_iv\diamond f\|$.
		\item If $uv\diamond f$ goes through $k_{i+1}$, then $\|uv\diamond f\|=\|uk_{i+1}\diamond f\|+\|k_{i+1}v\|$.
		\item If $uv\diamond f$ goes through neither $k_i$ nor $k_{i+1}$, then it avoids the entire portion of $k_i\To k_{i+1}$, thus also avoids $y$. We have $\|uv\diamond f\|\ge \|uv\diamond y\|$. But $\|uv\diamond y\|\ge \|uv\diamond a\|$ by definition of $y$, and $\|uv\diamond a\|\ge \|uv\diamond f\|$. (Recall that $a$ is the ``tail'' of $f$.) Thus $\|uv\diamond f\|=\|uv\diamond y\|$.\label{item:query-case-iii}
	\end{enumerate}
	
	It is easy to see that a similar equation holds for $r$-truncated DSOs:
	\begin{equation}
	\|uv\diamond f\|_r=\min\{\|uk_{i+1}\diamond f\|_r+\|k_{i+1}v\|, \|uk_i\|+\|k_iv\diamond f\|_r,\|uv\diamond y\|_r,r\},
	\label{eq:triple-path-truncated}
	\end{equation}
	where $y$ is any vertex in $[k_i,k_{i+1}]$ that maximizes $\|uv\diamond y\|_r$.
	
	Recall that we already know the values $\|uk_i\|$ and $\|k_{i+1}v\|$. To compute $\|uk_{i+1}\diamond f\|_r$, we note that if $f$ is the last $j$-th vertex or edge in $uk_{i+1}$, then $j\le C\cdot 2^{c(k_{i+1})}\log n$. Therefore we can look up the value of $\|uk_{i+1}\diamond f\|_r$ from (Data \ref{item:avoid-priority}). Similarly we can look up $\|k_iv\diamond f\|_r$. Finally, we can look up $\|uv\diamond y\|_r$ from (Data \ref{item:avoid-bottleneck}).
	
	We can see that the query time is $O(1)$.
	
	\subsection{Computing (Data \ref{item:avoid-bottleneck})}\label{sec:find-y}
	
	We will use the following notation. Let $p$ be a path from $u$ to $v$ which is fixed in context, and $a,b$ be two vertices in $p$. We will say that $a<b$ if $|p[u,a]|<|p[u,b]|$, i.e.~$a$ appears strictly before $b$ on the path $p$. Similarly, $a>b$, $a\le b$, $a\ge b$ mean $|p[u,a]|>|p[u,b]|$, $|p[u,a]|\le |p[u,b]|$, $|p[u,a]|\ge |p[u,b]|$ respectively.
	
	Let $u,v\in V$, and $s, t$ be two vertices on the path $uv$ such that $u < s < t < v$. Let $y\in [s,t]$ be the vertex in $[s,t]$ which maximizes $\|uv\diamond y\|_r$. We first show that \emph{assuming we have built some oracles}, we can find this vertex $y$ in $O(\log n)$ oracle calls and $O(\log n)$ additional time. The idea is to use a binary search described in \cite[Section 6]{BernsteinK09}.
	
	\begin{lemma}\label{lemma:find-y}
		Let $r$ be an integer, $u,v\in V$, and $s,t$ be two vertices on the path $uv$ such that $u<s<t<v$. Suppose we have the following oracles, each with $O(1)$ query time:
		\begin{itemize}
			\item an oracle that given a vertex $x\in st$, outputs $\|ut\diamond x\|_r$;
			\item an oracle that given a vertex $x\in st$, outputs $\|sv\diamond x\|_r$;
			\item an oracle that given an interval $[s',t']$ such that $s\le s'\le t'\le t$, outputs a vertex $x\in [s',t']$ that maximizes the value $\|ut\diamond x\|_r$.
		\end{itemize}
		Then we can find a vertex $y\in [s,t]$ which maximizes $\|uv\diamond y\|_r$ in $O(\log n)$ time.
	\end{lemma}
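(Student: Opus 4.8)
The plan is to adapt the binary search of \cite[Section~6]{BernsteinK09}; the only genuinely new ingredients are that every length is replaced by its $r$-truncation and that ties among shortest paths are broken by the $\rho$-convention of \cref{sec:non-unique}, so that (Property~\ref{prop:consistency}) is available throughout. First I would establish a structural decomposition of $\|uv\diamond x\|_r$ for $x\in[s,t]$. Put $p=uv\diamond x$, and walking along $p$ let $w'$ be the last vertex of $p$ lying on $uv$ before the position of $x$ and $w$ the first vertex of $p$ lying on $uv$ after the position of $x$ (both exist, with $u$ and $v$ as candidates). By (Property~\ref{prop:consistency}), $p[u,w']=uw'$ and $p[w,v]=wv$, so $p=uv[u,w']\cdot p[w',w]\cdot uv[w,v]$ with the detour $p[w',w]$ touching $uv$ only at $w'$ and $w$. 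If $w\le t$ then $p$ passes through $t$ and $\|uv\diamond x\|=\|ut\diamond x\|+\|tv\|$; if $w>t$ and $w'\ge s$ then $p$ passes through $s$ and $\|uv\diamond x\|=\|us\|+\|sv\diamond x\|$; and if $w'<s<t<w$ then $p$ avoids every vertex of the subpath $uv[s,t]$, so $\|uv\diamond x\|$ equals a value $C$ that does not depend on $x$. Since appending a prefix or suffix of $uv$ to a path never shortens it, each of the three quantities on the right is always an upper bound for $\|uv\diamond x\|$ and one is always tight; after truncating at $r$,
\[
\|uv\diamond x\|_r=\min\bigl\{\,\|ut\diamond x\|_r+\|tv\|,\ \ \|us\|+\|sv\diamond x\|_r,\ \ C_r,\ \ r\,\bigr\},
\]
where $\|us\|$ and $\|tv\|$ are fixed numbers and $C_r=\min\{C,r\}$ is constant as $x$ ranges over $[s,t]$.

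Because truncating by a constant commutes with taking a maximum, any $x$ maximizing $g(x):=\min\{\|ut\diamond x\|_r+\|tv\|,\ \|us\|+\|sv\diamond x\|_r,\ r\}$ over $[s,t]$ also maximizes $\|uv\diamond x\|_r$; so it suffices to return $\arg\max_{x\in[s,t]}g(x)$, and $g$ is assembled only from outputs of the first two oracles together with the fixed numbers $\|us\|,\|tv\|$. I would next prove the monotonicity that drives the search: as $x$ moves from $s$ towards $t$, the detour endpoints $w'(x)$ and $w(x)$ are both non-decreasing. Granting this, the set $\{x\in[s,t]:t\in uv\diamond x\}$ is a prefix $[s,x^{\star}_t]$ of $[s,t]$, on which $g(x)=\min\{\|ut\diamond x\|_r+\|tv\|,r\}=\|uv\diamond x\|_r$; the set $\{x\in[s,t]:s\in uv\diamond x\}$ is a suffix $[x^{\star}_s,t]$, on which $g(x)=\min\{\|us\|+\|sv\diamond x\|_r,r\}=\|uv\diamond x\|_r$; and on a possible gap between them $\|uv\diamond x\|_r$ equals the constant $C_r$, which — being an upper bound for $\|uv\diamond x\|_r$ over all of $[s,t]$ — makes any gap vertex a valid answer.

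The search itself then follows \cite[Section~6]{BernsteinK09}: maintain a subinterval $[s',t']\subseteq[s,t]$ for which $\max_{[s',t']}g=\max_{[s,t]}g$, and in each round apply the third oracle to an appropriate half of $[s',t']$ to obtain its $\|ut\diamond\cdot\|_r$-maximizing vertex, evaluate $\|ut\diamond\cdot\|_r$ and $\|sv\diamond\cdot\|_r$ at a constant number of probe vertices via the first two oracles, and compare $\|ut\diamond x\|_r+\|tv\|$ against $\|us\|+\|sv\diamond x\|_r$ there. The prefix/suffix structure of the previous paragraph identifies which half of $[s',t']$ can still contain a maximizer of $g$, while the interval-maximum oracle is exactly what lets the search leap over the non-monotone ``dips'' of $\|ut\diamond\cdot\|_r$. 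After $O(\log n)$ halving rounds the interval has constant length and a maximizer is found directly, for a total of $O(\log n)$ oracle calls and $O(\log n)$ additional time.

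I expect the main obstacle to be the nesting property that $w'(x)$ and $w(x)$ are non-decreasing — which I would prove by an exchange argument splicing $uv\diamond x$ and $uv\diamond x'$ for $x<x'$ and invoking (Property~\ref{prop:consistency}) — together with checking that each round of the search decides the correct half of $[s',t']$ using \emph{only} the three oracles listed (in particular, that the ``through-$s$'' regime can be handled via the prefix side, so that no separate interval-maximum oracle for $\|sv\diamond\cdot\|_r$ is needed). Once the unbounded, unique-shortest-path case is settled, the passage to $r$-truncated lengths and to the $\rho$-tie-breaking via (Property~\ref{prop:consistency}) is routine bookkeeping.
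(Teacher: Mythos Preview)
Your plan is essentially the paper's: reduce to maximizing $h(y)=\min\{\|ut\diamond y\|_r+\|tv\|,\ \|us\|+\|sv\diamond y\|_r,\ r\}$ (your $g$) over $[s,t]$ and then binary-search as in \cite[Section~6]{BernsteinK09}. But the machinery you flag as the ``main obstacle''---monotonicity of the detour endpoints $w'(x),w(x)$ and the resulting prefix/suffix decomposition of $[s,t]$---is neither used nor needed in the paper, and your description of the halving step (``compare \dots\ there; the prefix/suffix structure identifies which half'') is precisely where the real content of the proof lives yet is left vague.

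The paper's halving step is a short, purely local argument. Use the third oracle on the left half $[s',q]$ to obtain a vertex $y$ maximizing $\|ut\diamond y\|_r$ there. If $h(y)=\min\{\|ut\diamond y\|_r+\|tv\|,r\}$, then every $x\in[s',q]$ satisfies $h(x)\le\min\{\|ut\diamond x\|_r+\|tv\|,r\}\le h(y)$ by the choice of $y$, so recurse on $[q,t']$. Otherwise $h(y)=\|us\|+\|sv\diamond y\|_r<\min\{\|ut\diamond y\|_r+\|tv\|,r\}$; this forces $t\notin sv\diamond y$ (else $\|us\|+\|sv\diamond y\|=\|us\|+\|st\diamond y\|+\|tv\|\ge\|ut\diamond y\|+\|tv\|$), and by (Property~\ref{prop:consistency}) the suffix of $sv\diamond y$ from any vertex $z\in(y,t]$ would equal $zv$ and hence pass through $t$, so in fact $sv\diamond y$ avoids all of $[y,t]\supseteq[q,t']$. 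Therefore every $x\in[q,t']$ has $\|sv\diamond x\|_r\le\|sv\diamond y\|_r$ and $h(x)\le h(y)$; recurse on $[s',q]$. No global nesting lemma is required, and the asymmetry you worried about (no interval-maximum oracle for $\|sv\diamond\cdot\|_r$) is resolved by exactly this one-line observation rather than by any prefix/suffix bookkeeping.
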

	\begin{proof}
		For any $y\in [s,t]$, we denote
		\[h(y)=\min\{\|ut\diamond y\|_r+\|tv\|, \|us\|+\|sv\diamond y\|_r,r\}.\]
		By Eq.~\eqref{eq:triple-path-truncated}, we have $\|uv\diamond y\|_r = \min\{h(y), \|uv\diamond y^\star\|_r\}$ where $y^\star$ is some vertex independent of $y$. Thus it suffices to find some $y\in [s,t]$ that maximizes $h(y)$.
		
		We use a binary search. Assume that we know the optimal $y$ is in some interval $[s',t']$, where $s\le s'<t'\le t$. (Initially we set $s'=s$ and $t'=t$.) If $|s't'|=O(1)$ then we can use brute force to find a vertex $y\in [s',t']$ that maximizes $h(y)$. Otherwise let $q$ be the middle point of $[s',t']$, and we use the third oracle to find a vertex $y\in [s',q]$ that maximizes $\|ut\diamond y\|_r$. There are two cases:\begin{itemize}
			\item If $\min\{\|ut\diamond y\|_r+\|tv\|,r\}=h(y)$, then we can restrict our attention to the interval $[q,t']$. This is because for every vertex $x\in [s',q]$,
			\[h(x)\le \min\{\|ut\diamond x\|_r+\|tv\|,r\}\le \min\{\|ut\diamond y\|_r+\|tv\|,r\} = h(y).\]
			\item Otherwise, $h(y)=\|us\|+\|sv\diamond y\|_r$. Since $\|ut\diamond y\|_r + \|tv\|$ is strictly larger than $h(y)$, we know that $uv\diamond y$ does not go through $t$. Therefore $sv\diamond y$ avoids every vertex in $[q,t']$. (See \cref{fig:binary-search}.) For every vertex $x\in [q,t']$,
			\[h(x)\le \|us\|+\|sv\diamond x\|_r \le \|us\|+\|sv\diamond y\|_r = h(y).\]
			It follows that we can restrict our attention to the interval $[s',q]$ now.
		\end{itemize}
		\begin{figure}[H]
			\centering
			\includegraphics[width=0.9\linewidth]{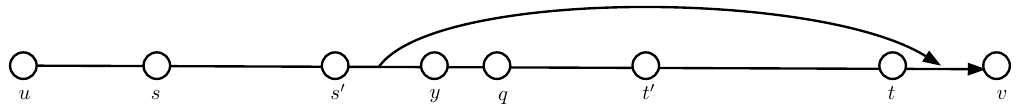}
			\caption{If $sv\diamond y$ does not go through $t$, then $sv\diamond y$ does not go through the whole interval $[q,t']$.}\label{fig:binary-search}
		\end{figure}
		Therefore, we can always shrink the length of our candidate interval $[s',t']$ by a half. It follows that we can find the desired vertex $y$ in $O(\log n)$ time.
	\end{proof}
	
	Now we show how to compute (Data \ref{item:avoid-bottleneck}) in $\tilde{O}(n^2)$ time (assuming that (Data \ref{item:avoid-priority}) is ready). The most crucial ingredient is the following Range Maximum Query (RMQ) structures (used in the third item of \cref{lemma:find-y}).
	
	For every $u,v\in V$, consider the following sequence (of length $\ell=\min\{|uv|-1, C\cdot 2^{c(v)}\log n\}$):
	\[\mleft(\|uv\diamond x_{-1}\|_r, \|uv\diamond x_{-2}\|_r, \dots, \|uv\diamond x_{-\ell}\|_r\mright),\]
	where $x_{-i}$ denotes the last $i$-th vertex in the path $uv$ ($v$ is the last $0$-th). We build an RMQ structure of this sequence, which given a query $(s,t)$ ($1\le s\le t\le \ell$), outputs a number $i\in[s,t]$ that maximizes $\|uv\diamond x_{-i}\|_r$. After we compute the above sequence, this data structure can be preprocessed in $O(\ell)$ time, and each query costs $O(1)$ time \cite{BenderF00}.
	
	For every priority $c\le O(\log n)$, there are $\tilde{O}(n/2^c)$ vertices $v$ of this priority, and for each vertex $v$ we construct $n$ RMQ structures (one for each $u\in V$) on length-$\tilde{O}(2^c)$ sequences. The total size of these RMQ structures is
	\[\sum_{c=1}^{O(\log n)}\tilde{O}(n/2^c)\cdot n\cdot \tilde{O}(2^c)=\tilde{O}(n^2).\]
	Therefore, these RMQ structures can be preprocessed in $\tilde{O}(n^2)$ time. (Note that every element $\|uv\diamond x_{-i}\|_r$ is already computed in (Data \ref{item:avoid-priority}).)
	
	To compute (Data \ref{item:avoid-bottleneck}), we enumerate $u,v,k_i,k_{i+1}$ where $k_i,k_{i+1}$ are consecutive key vertices in $uv$. There are $\tilde{O}(n^2)$ possible combinations of $(u,v,k_i,k_{i+1})$. As argued in \cref{sec:BK09-query}, we know that the following data are already computed in (Data \ref{item:avoid-priority}):
	\begin{itemize}
		\item $\|uk_{i+1}\diamond x\|_r$, for any $x\in k_ik_{i+1}$;
		\item $\|k_iv\diamond x\|_r$, for any $x\in k_ik_{i+1}$.
	\end{itemize}
	
	We also have the following RMQ oracles constructed above:
	\begin{itemize}
		\item An oracle that given any interval $[s',t']$ on the path $uv$ such that $k_i\le s'\le t'\le k_{i+1}$, finds the vertex $y\in [s',t']$ that maximizes $\|uk_{i+1}\diamond y\|_r$ in $O(1)$ time.
	\end{itemize}
	
	It follows from \cref{lemma:find-y} that we can find a vertex $y\in [k_i, k_{i+1}]$ that maximizes $\|uv\diamond y\|_r$ in $O(\log n)$ time. The total time for computing (Data \ref{item:avoid-bottleneck}) is thus $\tilde{O}(n^2)$.
	
	\section{Reporting the Actual Path}
	In this section, we modify our DSO so that it also supports \emph{path-reporting} queries: Given a query $(u, v, f)$, we want to report not only $\|uv\diamond f\|$, but also an \emph{actual} shortest path from $u$ to $v$ that avoids $f$. If the path has length $\ell$, then the query algorithm runs in $O(\ell)$ time. Unfortunately, the size of the new DSO becomes $\tilde{O}(n^{2+a})$ where $a$ is the constant fixed in \cref{sec:putting-it-together}, i.e.~$a = (3-\omega) / 2$ for undirected graphs and $a = 0.276724$ for directed graphs.\footnote{We remark that even if we do not need to report these paths, our preprocessing algorithm still needs space complexity $\tilde{O}(n^{2+a})$ (although the size of the DSO is $\tilde{O}(n^2)$).}
	
	Recall that our DSO is constructed as follows. Fix $r = n^a$ and $i_\star = \lceil\log_{3/2}(Mn/r)\rceil$. Let $\caD^0$ be an $r$-truncated DSO as in \cref{sec:r-small}, and $\caD^i = \Fast(\Extend(\caD^{i-1}))$ for every $1\le i\le i_\star$. Then $\caD^{i_\star}$ is a (normal) DSO.
	
	\subparagraph{Path-reporting structure for $\caD^0$.} Recall that the structure $\caD^0$ consists of subgraphs $G_i^\e$ and $G_i^\v$. For each subgraph, we can compute an implicit representation of the shortest paths of length at most $r$ as follows:\begin{itemize}
		\item If the graph is directed, we run the first $\lceil \log_{3/2}r\rceil$ iterations of the algorithm \textbf{rand-short-path} in \cite{Zwick02}. After that, the matrix $W$ in \cite[Figure 2]{Zwick02} encodes shortest paths of length at most $r$.
		\item If the graph is undirected, we simply use \cite[Section 4]{ShoshanZ99}.
	\end{itemize}
	
	Given the implicit representations, it is easy to report the actual path for any query of $\caD^0$. As we need to store $\tilde{O}(r)$ such representations, the size of our DSO becomes $\tilde{O}(n^2 r)$.
	
	\subparagraph{Path-reporting algorithm for $\caD^i$.} For every $1\le i\le i_\star$, the preprocessing algorithm of $\caD^i$ invokes $\tilde{O}(n^2)$ queries to $\Extend(\caD^{i-1})$. For each such query $(u_q, v_q, f_q)$:
	\begin{itemize}
		\item If $\|u_qv_q\diamond f_q\|\le r$, then the path $u_qv_q\diamond f_q$ can be retrieved from $\caD^0$, thus we do not need to store anything.
		\item Otherwise, suppose $i_q = \lfloor \log_{3/2}(\|u_qv_q\diamond f_q\| / r)\rfloor$, then $r(3/2)^{i_q} \le \|u_qv_q\diamond f_q\| < r(3/2)^{i_q+1}$. If $i_q \ge i$, then $\caD^i$ do not need the exact value of $\|u_qv_q\diamond f_q\|$; therefore we may assume $i_q < i$. This means that the query $(u_q, v_q, f_q)$ is captured by $\caD^{i_q + 1}$ but not by $\caD^{i_q}$. We store the ``hitting vertex'' $h_q$ in $\Extend(\caD^{i_q})$ that hits $\m(u_qv_q\diamond f_q)$ (as in \cref{sec:r-large}). Then, $u_qv_q\diamond f_q$ is the concatenation of $u_qh_q\diamond f_q$ and $h_qv_q\diamond f_q$, both of which can be retrieved from $\caD^{i_q}$.
	\end{itemize}
	
	Consider a query $(u, v, f)$. By the query algorithm in \cref{sec:BK09-query}, $uv\diamond f$ belongs to one of the following cases: \begin{enumerate}[(i)]
		\item the concatenation of $uk$ and $kv \diamond f$, for some key vertex $k$;
		\item the concatenation of $uk \diamond f$ and $kv$, for some key vertex $k$;
		\item $uv\diamond y$ for the vertex $y$ computed in \cref{sec:find-y}.
	\end{enumerate}
	In all of these cases, $uv\diamond f$ is the concatenation of a shortest path in $G$ (that is, $uk$, $kv$, or the empty path) and some $u_qv_q\diamond f_q$, where $(u_q, v_q, f_q)$ is a query recorded in the preprocessing algorithm.\begin{itemize}
		\item We can retrieve the shortest path in $G$ using the shortest path trees.
		\item If $\|u_qv_q\diamond f_q\| \le r$, we can retrieve it in $\caD^0$; otherwise we recursively find $u_qh_q\diamond f_q$ and $h_qv_q\diamond f_q$ in $\caD^{i_q}$, and concatenate them together to form $u_qv_q\diamond f_q$.
	\end{itemize}
	
	\subparagraph{Time complexity.} It remains to show that only $O(\ell)$ time is spent on retrieving a path of length $\ell$. The time complexity is proportional to $\ell$ plus the number of preprocessed queries $(u_q, v_q, f_q)$ that we access (over all DSOs $\caD^i$). Note that each $(u_q, v_q, f_q)$ corresponds to a hitting vertex $h_q$ on the reported path, and different queries correspond to different $h_q$'s. Therefore the number of such queries is at most $\ell$, which means the path-reporting algorithm only takes $O(\ell)$ time.

	\section{Open Questions}
	
	The main open problem after this work is to improve the preprocessing time for DSOs. Can we improve the preprocessing time for directed graphs to $\tilde{O}(n^{2.5286}M)$, matching the current best algorithm for APSP in directed graphs \cite{Zwick02}? A subsequent work by Yong Gu and the author \cite{GR21} improved the preprocessing time to $\tilde{O}(n^{2.5794}M)$, but there is still a gap between this time bound and the time bound for APSP.
	
	We can compute APSP for undirected graphs in $\tilde{O}(n^\omega M)$ time \cite{Seidel95, ShoshanZ99}, and another interesting question is whether there is a DSO for undirected graphs with the same preprocessing time (and constant query time).
	
	Finally, can we extend our technique to deal with negative weights? There are a few candidate definitions of ``$r$-truncated DSOs'' in this case, if we interpret $r$ as the number of edges in the shortest path, instead of the length of the shortest path. For example, we may define a DSO is $r$-truncated if on a query $(u, v, f)$, it outputs some value no less than $\|uv\diamond f\|$, and when $|uv\diamond f|\le r$, it outputs $\|uv\diamond f\|$ exactly. However, it seems that every definition of ``$r$-truncated DSO'' that we tried were not compatible with arguments in \cref{sec:find-y}.
	
	\bibliography{article}
	
	\appendix

	\section{Breaking Ties for Non-unique Shortest Paths}\label{sec:APBSP}
	In this section, we prove \cref{thm:property-APBSP} and \ref{thm:APBSP}.
	
	Recall that in a subgraph $G'$ of $G$, we denote $w_{G'}(u, v)$ as the largest (w.r.t.~$\prec$) edge $w$ such that there is a shortest path from $u$ to $v$ whose bottleneck edge (i.e.~smallest w.r.t.~$\prec$) is $w$. Also the path $\rho_{G'}(u, v)$ is defined, inductively from the smallest $\|uv\|$ to the largest $\|uv\|$, as follows: If $u=v$, then $\rho_{G'}(u, v)$ is the empty path; otherwise, suppose $w_{G'}(u, v)$ is an edge from $u^\star$ to $v^\star$, then $\rho_{G'}(u, v)$ is the concatenation of $\rho_{G'}(u, u^\star)$, $w_{G'}(u, v)$ and $\rho_{G'}(v^\star, v)$.
	
	\thmProperties*
	\begin{proof}
		In this proof we will always use the following notation: Let $w = w_{G'}(u, v)$, and suppose $w$ is an edge from $u^\star$ to $v^\star$. Denote $p_1 = \rho_{G'}(u, u^\star)$ and $p_2 = \rho_{G'}(v^\star, v)$. We will prove \cref{thm:property-APBSP} inductively, from the smallest $\|uv\|$ to the largest $\|uv\|$. \cref{thm:property-APBSP} is clearly true when $u=v$.
		
		\begin{figure}[H]
			\centering
			\includegraphics{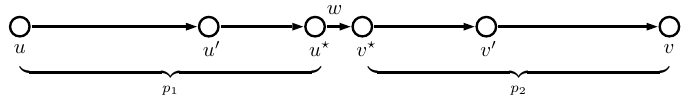}
		\end{figure}
		
		(Property \ref{prop:consistency}): If both $u', v' \in p_1$, then (Property \ref{prop:consistency}) is true by induction on $p_1$. Similarly, if both $u', v'\in p_2$, then (Property \ref{prop:consistency}) is also true by induction on $p_2$.
		
		Now suppose $u'\in p_1$ and $v'\in p_2$. Our first observation is that $w_{G'}(u', v') = w_{G'}(u, v) = w$. Actually, since $(\rho_{G'}(u, v))[u', v']$ is a shortest path from $u'$ to $v'$ with bottleneck edge $w$, we have $w_{G'}(u', v') \succeq w$. But if there is any shortest path $p'$ from $u'$ to $v'$ whose bottleneck edge is $\succ w$, the concatenation of $p_1[u, u']$, $p'$ and $p_2[v', v]$ will be a shortest path from $u$ to $v$ with bottleneck edge $\succ w$, a contradiction. Thus we can only have $w_{G'}(u', v') = w$.
		
		By induction on $p_1$, we have $\rho_{G'}(u', u^\star) = p_1[u', u^\star]$. By induction on $p_2$, we have $\rho_{G'}(v^\star, v) = p_2[v^\star, v]$. Therefore, $\rho_{G'}(u', v')$ is the concatenation of $p_1[u', u^\star]$, $w$, and $p_2[v^\star, v]$. This path coincides with the $u'\To v'$ portion of $\rho_{G'}(u, v)$.
		
		(Property \ref{prop:subgraph}): it suffices to prove $w_G(u, v) = w_{G'}(u, v)$, and use induction on $p_1$ and $p_2$. Since $G'$ is a subgraph of $G$, and the distance from $u$ to $v$ is the same in $G$ and $G'$, we have $w_G(u, v)\succeq w_{G'}(u, v)$. Since $\rho_G(u, v)$ is completely in $G'$ and the bottleneck edge of $\rho_G(u, v)$ is $w_G(u, v)$, we have $w_{G'}(u, v)\succeq w_G(u, v)$. This completes the proof.
	\end{proof}
	
	\thmAPBSP*
	\begin{proof}[Proof Sketch]
		The first step is to compute $w_{G'}(u, v)$ for every pair of vertices $u, v\in V$. This problem is exactly the same as the \emph{all-pairs bottleneck shortest paths} problem (APBSP; \cite[Theorem 4.4]{DuanP09}). However, Theorem 4.4 of \cite{DuanP09} only claims to work for unweighted graphs, so we verify that it also works for graphs with positive integer edge weights here.
		
		We start by using \cite{Zwick02} to compute $\|\rho_{G'}(u, v)\|$ for every $u, v\in V$. Then, for every $0\le i\le \lceil\log_{3/2}(nM)\rceil$ and $r=(3/2)^i$, assuming we have computed $w_{G'}(u, v)$ for every $u, v$ such that $\|\rho_{G'}(u, v)\|\le r$, we compute $w_{G'}(u, v)$ for every $u, v$ such that $\|\rho_{G'}(u, v)\|\le (3/2)r$. This is done as follows. We sample a vertex set $H$ by adding each vertex into $H$ independently w.p.~$\Theta(M\log n/r)$, then w.h.p.~$H$ hits $\m(\rho_{G'}(u, v))$ for every $u, v$ such that $r < \|\rho_{G'}(u, v)\| \le (3/2)r$. (Recall that $\m(\rho_{G'}(u, v))$ is the ``middle $1/3$ part'' of $\rho_{G'}(u, v)$.) Let $D, W$ be $V\times H$ matrices such that for every $u \in V, h \in H$,
		\begin{itemize}
			\item if $\|\rho_{G'}(u, h)\| \le r$, then $D[u, h] = \|\rho_{G'}(u, h)\|$ and $W[u, h] = w_{G'}(u, h)$;
			\item otherwise $D[u, h] = +\infty$ and $W[u, h] = -\infty$.
		\end{itemize}
		Then we compute the Distance-Max-Min product (Definition 4.3 of \cite{DuanP09}) of $(D, W)$ and $(D^{\sf T}, W^{\sf T})$ to obtain a $V\times V$ matrix $W'$ such that
		\[W'[u, v] = \max_{h: D[u, h] + D[h, v] = \|\rho_{G'}(u, v)\|}\mleft\{\min\mleft\{W[u, h], W[h, v]\mright\}\mright\}.\]
		We can see that $W'[u, v] = w_{G'}(u, v)$ for every $u, v\in V$ such that $r<\|\rho_{G'}(u, v)\|\le (3/2)r$. This completes the description of the APBSP algorithm.
		
		Now we analyze the time complexity. Let $|H|=n^s$, then $r=\tilde{O}(n^{1-s}M)$, and the finite entries in $D$ are in $\{1,2,\dots,r\}$. By \cite[Theorem 4.3]{DuanP09}, the Distance-Max-Min product takes
		\[O\mleft(\min\mleft\{n^{2+s}, r^{1/2}n^{1+s/2+\omega(s)/2}\mright\}\mright) = \tilde{O}\mleft(\min\mleft\{n^{2+s}, n^{(3+\omega(s))/2}M^{1/2}\mright\}\mright)\le \tilde{O}\mleft(n^{(3+\omega)/2}M^{1/2}\mright)\]
		time. Since we only execute $O(\log n)$ rounds of Distance-Max-Min product, the overall running time of the algorithm is $\tilde{O}\mleft(n^{(3+\omega)/2}M^{1/2}\mright)$.
		
		Now we construct the outgoing shortest path trees from the table of $w_{G'}(u, v)$ for every $u, v\in V$. It suffices to compute the parents of each node $v$ in the trees $\Tout(u)$ (which we denote as $\parent_u(v)$). We compute $\parent_u(v)$ inductively from the smallest $\|uv\|$ to the largest.
		
		Let $(u, v)$ be the pair of vertices we are processing, and assume that for every $u', v'\in V$ such that $\|u'v'\| < \|uv\|$, we have already computed $\parent_{u'}(v')$. Suppose $w_{G'}(u, v)$ is an edge from $u^\star$ to $v^\star$. If $v^\star=v$, then $\parent_u(v) = u^\star$. Otherwise, it is easy to see that $\|\rho_{G'}(v^\star, v)\| < \|\rho_{G'}(u, v)\|$ and $\parent_u(v) = \parent_{v^\star}(v)$. Thus we can compute every outgoing shortest path tree in $\tilde{O}(n^2)$ time. Similarly, the incoming shortest path trees can also be computed in $\tilde{O}(n^2)$ time.
	\end{proof}
	
\end{document}